\theoremstyle{plain}
\newtheorem{theorem}{Theorem}
\newtheorem{lemma}{Lemma}
\theoremstyle{definition}
\theoremstyle{remark}
\newtheorem{remark}{Remark}
\DeclareMathOperator{\tr}{tr}
\newcommand{\argmin}{\qopname\relax m{arg\,min}}
\newenvironment{assumptionp}[1]{
  
  \assumptionalt
}{\endassumptionalt}
\newcommand{\myitem}[1]{%
\item[#1]\protected@edef\@currentlabel{#1}%
}
\begin{document}

\def\spacingset#1{\renewcommand{\baselinestretch}%
{#1}\small\normalsize} \spacingset{1}

\begin{frontmatter}
\title{Sparse Multivariate Linear Regression with Strongly Associated Response Variables}

\author{Daeyoung Ham\corref{cor1}\fnref{fn1}}
\ead{daeyoung.ham@utsa.edu}
\author{Bradley S. Price\fnref{fn2}} \ead{
brad.price@mail.wvu.edu}
\author{Adam J. Rothman\fnref{fn3}} \ead{arothman@umn.edu}

\cortext[cor1]{Corresponding author}

\begin{abstract}
We propose new methods for multivariate linear regression when the regression coefficient matrix is sparse and the error covariance matrix is dense.
We assume that the error covariance matrix has equicorrelation across the response variables.  
Two procedures are proposed: one is based on constant marginal response variance (compound symmetry), and the other is based on general varying marginal response variance.
Two approximate procedures are also developed for high dimensions.  We propose an approximation to the Gaussian validation likelihood for tuning parameter selection.  Extensive numerical experiments illustrate when our procedures outperform relevant competitors
as well as their robustness to a moderate degree of model misspecification.
\end{abstract}

\begin{keyword}
High-dimensional multivariate linear regression, equicorrelation
\end{keyword}

\end{frontmatter}

\section{Introduction}\label{sec:Intro}

Multivariate linear regression simultaneously models
multiple response variables in terms of 
one or more predictors.  It is well studied, and we direct readers to \citet{reinsel1998multivariate} for 
a detailed review.

We focus on fitting multivariate linear regression
models by penalized or constrained Gaussian likelihood. One of the first approaches maximizes the Gaussian likelihood
subject to a rank constraint on the regression
coefficient matrix \citep{izenman1975reduced, reinsel1998multivariate}.
Like other dimension 
reduction methods, interpreting the fitted model in terms of the original predictors may be difficult.

\citet{MRCE} proposed to jointly estimate the error precision matrix and the regression coefficient matrix by penalized Gaussian likelihood.
They used $L_1$-penalties to encourage sparsity in estimates of the regression coefficients and error precision matrix, which can lead to easy-to-interpret
fitted models.  They showed that using the Gaussian loglikelihood, which accounts for the association between the response components, leads to better parameter estimation than using a multivariate residual sum of squares criterion, which does not 
account for this association.  

Similar approaches that jointly estimate the error precision matrix and the regression coefficient matrix have been proposed.
For instance, \citet{lee2012simultaneous} assumed that the response variables and the predictors have a joint multivariate normal distribution, and \citet{Wang2015joint} decomposed the multivariate regression problem into a series of penalized conditional log-likelihood of each response conditional on the predictors and other responses.
\citet{Navon.Rosset.2020} extended this to multivariate linear mixed effects models, and \citet{chang2023robust} considered an extension of \citet{MRCE}'s method to multivariate robust linear regression.  
\citet{zhou2017sparse} and \citet{chan2025drfarm} assumed that the covariance structure of the response variables follows a factor model with latent factors.
\citet{zhu2020convex} considered a convex reparametrization of \citet{MRCE}'s joint optimization problem.  These methods assumed
that the error precision matrix is sparse, which may be unreasonable in some applications with highly correlated response components.
In addition, in high-dimensional settings, the graphical lasso subproblem \citep{friedman2008sparse} used
in the block-wise coordinate decent algorithm of \citet{MRCE} struggles when the error precision matrix is dense.  These dense cases call for
values of the penalty tuning parameter near zero, which lead to algorithm failures or very long computing times when solving the graphical lasso subproblem.

This motivated us to develop a new multivariate linear regression method for high-dimensional settings that works when the error precision matrix is dense.  Similarly to \citet{MRCE}, we jointly estimate the regression coefficient matrix and the error precision matrix by minimizing the negative Gaussian loglikelihood. 
However, unlike their approach, we assume the error covariance matrix has compound symmetry (or equicorrelation).   Our primary goal is to estimate the regression coefficient matrix when the responses are highly 
correlated.  Although our assumed error correlation structure is simple, our estimation 
of the regression coefficient matrix is robust
not only to a moderate degree of misspecification in the error correlation but also to non-sparse structures in the true regression coefficients when the error covariance is correctly specified.
We propose an efficient computational algorithm to compute our estimators
and we also propose an approximation to the
validation likelihood for tuning parameter selection.

\section{Problem setup and proposed estimator}\label{sec:setup}
Let $x_i=(x_{i1},\ldots,x_{ip})^T$ be the $p$-dimensional vector of nonrandom predictor values; let $y_i=(y_{i1},\ldots,y_{iq})^T$ be the observed $q$-dimensional response; and let $\epsilon_i=(\epsilon_{i1},\ldots,\epsilon_{iq})^T$ be the error for the $i$th subject $(i=1,\ldots, n)$.
The multivariate linear regression model assumes that
$y_i$ is a realization of
\begin{align}\label{model.setup}
Y_i = B_*^Tx_i + \epsilon_i,\qquad \quad i=1,\ldots,n,    
\end{align}
where $B_*\in\mathbb{R}^{p\times q}$ is
the unknown regression coefficient matrix; 
and $\epsilon_1,\ldots,\epsilon_n$ are iid $N_q(0,\Sigma_*)$.  
To allow for highly correlated response variables, 
we assume that
\begin{equation} \label{mainassumption}
\Sigma_* = \eta^2_{*} \left\{ (1-\theta_*)I_q + \theta_* 1_q 1_{q}^T \right\},
\end{equation}
where $\eta_*\in(0,\infty)$ and $\theta_*\in [0,1)$ are unknown. 
This implies that ${\rm var}(\epsilon_{i,j}) = \eta^2_{*}$ for $(i,j)\in\{1,\ldots, n\}\times\{1,\ldots, q\}$ and that ${\rm corr}(\epsilon_{i,j}, \epsilon_{i,k}) = \theta_*$
when $j\neq k$.
This implicitly assumes that all
of the response components are on the same scale. 
When they are not, we propose a different method
described in Section \ref{sec:gen}.
Our numerical experiments suggest 
that our estimation of $B_*$, which is our primary target, is robust to 
misspecification of $\Sigma_*$.

To write the negative loglikelihood, we first 
express \eqref{model.setup} in terms of matrices:
let $Y\in\mathbb{R}^{n\times q}$ have $i$th row  $Y_i^T$; let $X\in\mathbb{R}^{n\times p}$ have $i$th row $x_i^T$; and let $E\in\mathbb{R}^{n\times q}$ have $i$th row $\epsilon_i^T$.
Then \eqref{model.setup} is $Y=XB_*+E$.
The negative log-likelihood function evaluated at
$(B, \Omega)$, where $\Omega$ is the variable corresponding to the inverse of $\Sigma_{*}$, is
$$
\mathcal{L}(B,\Omega)={\rm tr}\left[\frac{1}{n}(Y-XB)^T(Y-XB)\Omega\right]-\log |\Omega|.
$$
We propose the following penalized likelihood estimator of $(B_*,\eta_*^2,\theta_*)$:
\begin{equation}\label{eq:penalized.LL}
(\hat B, \hat \eta^2, \hat \theta)=
\argmin_{(B,\eta^2,\theta)\in\mathbb{R}^{p\times q} \times (0,\infty) \times [0,1)} \left\{
\mathcal{L}\left(B, \Omega(\eta^2, \theta)\right)
+ \lambda \sum_{j=1}^p \sum_{k=1}^q |B_{jk}|
\right\},
\end{equation}
where $\Omega(\eta^2, \theta)=\left[\eta^2\left\{\theta 1_q 1_{q}' + (1-\theta) I_q \right\}\right]^{-1}$
and $\lambda\in[0, \infty)$ is a tuning parameter.
Similarly to \citet{MRCE}, the $L_1$ penalty on $B$ encourages sparse regression coefficient estimation.
We label the solution to this optimization as the Multivariate Regression with Compound Symmetry [MRCS] estimator.  
When the response components are on different scales, 
we define an alternative estimator described in
Section \ref{sec:gen}.
The MRCE method \citep{MRCE} replaces $\Omega$ in equation \eqref{eq:penalized.LL} with a general error precision matrix consisting of the $q(q+1)/2$ free parameters that are not parametrized by $\eta$ and $\theta$. Added to the $L_1$-penalty on the regression coefficient, it adopts an additional $L_1$-penalty on the off-diagonal entries of $\Omega$ to encourage sparsity in the precision matrix. Additional simulation results demonstrating the computational failure of MRCE under dense error covariance structures (e.g., compound symmetry) are provided in Section 2 of the Supplementary material.

Our goal is to estimate $B_*$ and predict future response values:  we expect that the assumed error covariance structure in \eqref{mainassumption} will be an oversimplification in many applications.
When $\lambda=0$, it is known that
the regression coefficient estimator
that solves \eqref{eq:penalized.LL} would
be unchanged if $\Omega(\eta^2, \theta)$
were replaced by any positive definite matrix
\citep{MRCE}.  So the assumed error correlation
structure only influences how 
the entries in the regression coefficient matrix estimator are shrunk towards zero when $\lambda >0$.
Our numerical experiments suggest that 
our estimation of $B_*$ is robust to misspecification of the error covariance.

\begin{remark}
An anonymous referee pointed out that our compound symmetry–aware estimator reduces to a special case of methods proposed by \citet{zhou2017sparse} and \citet{chan2025drfarm}, which assume that the covariance structure of the response variables follows a factor model, when the model has a single latent factor with a factor loading vector of ones.
In Section 4 of the supplementary material, we show that, when the compound symmetry structure is correctly specified, our method outperforms the factor model with the oracle number of latent factors while requiring lower computational cost. 
Moreover, because the number of latent factors is typically unknown in practice, our computational advantage becomes increasingly significant as the number of latent factors increases.
Thus, our model can be seen as a special case of factor model, which uses more parameters to model the error structure more flexibly. However, as we discuss in Section 4 of the supplementary material, this added complexity combined with the nonconvex nature of the optimization often cause numerical difficulties in practice.
\end{remark}

\section{Computational algorithms}\label{sec:algorithms}
\subsection{Algorithm for exact computation}
By the derivation illustrated in \ref{sec:appd.derivation}, the penalized estimator in \eqref{eq:penalized.LL} is equivalent to the following:
\begin{equation}\label{main.objective.0}
\begin{split}
\argmin_{(B, \eta^2, \theta)\in\mathbb{R}^{p\times q} \times (0,\infty) \times [0,1)} F_\lambda(B,\eta^2,\theta;Y,X),
\end{split}
\end{equation}
where
\begin{equation*}
\begin{split}
F_\lambda(B,\eta^2,\theta;Y,X)&=\frac{1}{n\eta^2(1-\theta)}\|Y-XB\|_{F}^2 
 - \frac{\theta}{n\eta^2(1-\theta)(1-\theta+q\theta)} 
\|(Y-XB)1_q\|^2 \\
 &\indent+(q-1)\log(1-\theta)+\log(1+\{q-1\}\theta)+q\log(\eta^2)\\
 &\indent+ \lambda \sum_{j=1}^p \sum_{k=1}^q |B_{jk}|.
\end{split}
\end{equation*}
We solve \eqref{main.objective.0} 
by blockwise coordinate descent.
We treat $(\eta^2,\theta)$ as the first block, and $B$ as the second block.
We let the superscript $(k)$ denote the $k$th iterate of each component that is being updated.

For a fixed $\hat B^{(k)}$, we reparametrize the problem via $\alpha=\eta^2(1-\theta),\, \gamma=\eta^2(1+(q-1)\theta)$, which is a one-to-one mapping from $(\eta^2,\theta)$. 
Then \eqref{main.objective.0} with $\hat B^{(k)}$ fixed becomes
\begin{equation}\label{transformed.true.obj}
(\hat\alpha^{(k+1)},\hat\gamma^{(k+1)})=\argmin_{0<\alpha\le \gamma <\infty} 
\left\{
\frac{M_1^{(k)}}{\alpha}-\frac{1}{q}\left(\frac{1}{\alpha}-\frac{1}{\gamma}\right)M_2^{(k)}+\log(\gamma)+(q-1)\log(\alpha)
\right\},  
\end{equation}
where $M_1^{(k)}=\frac{1}{n}\|Y-X\hat B^{(k)}\|_F^2$ and $M_2^{(k)}=\frac{1}{n}\|(Y-X\hat B^{(k)})1_q\|^2$.
Since the objective function in \eqref{transformed.true.obj} is separable, by first order condition, we get 
\begin{align}
&\hat\alpha^{(k+1)}=\frac{qM_1^{(k)}-M_2^{(k)}}{q(q-1)},\label{eq:alpha.form}\\
&\hat\gamma^{(k+1)}=\hat\gamma^{(k+1)}(\hat\alpha^{(k+1)})=\max\left\{\hat\alpha^{(k+1)},\frac{M_2^{(k)}}{q}\right\}\label{eq:gamma.form},
\end{align}
where $M_1^{(k)}\ge M_2^{(k)}/q$ holds by the Cauchy-Schwarz inequality.
Finally,
\begin{align}\label{eta.theta.solver}
((\hat\eta^2)^{(k)},\hat\theta^{(k)})=\left(\hat\alpha^{(k)}+\frac{\hat\gamma^{(k)}-\hat\alpha^{(k)}}{q},\frac{\hat\gamma^{(k)}-\hat\alpha^{(k)}}{\hat\gamma^{(k)}+(q-1)\hat\alpha^{(k)}}\right).
\end{align}
The reparameterization converts the original non-convex optimization problem in $(\eta,\theta)$ into a separable problem in $(\alpha,\gamma)$, where each subproblem becomes a strictly convex scalar optimization admitting closed-form updates.
Thus it improves numerical stability and accelerates convergence while automatically obeying the positivity and ordering constraints. 
Just as importantly, this alternative ($\alpha, \gamma$) formulation also facilitates a more streamlined derivation of the large-sample limit distribution of ($\hat\eta, \hat\theta$) in Theorem 1.

Now, when $\hat\Omega^{(k+1)}((\hat\eta^2)^{(k+1)},\hat\theta^{(k+1)})$ is fixed, the corresponding optimization problem is given by
\begin{align}\label{beta.update}
\hat B^{(k+1)}(\hat\Omega^{(k+1)})=\argmin_B \left\{{\rm tr}\left[\frac{1}{n}(Y-XB)^T(Y-XB)\hat\Omega^{(k+1)}\right]\right\} +\lambda \sum_{j=1}^p \sum_{k=1}^q |B_{jk}|.
\end{align}
We solve \eqref{beta.update} through a function \texttt{rblasso} in the R package \texttt{MRCE} \citep{mrcepackage}.

To improve the performance with a better initializer, we suggest the following procedure.
We first perform $q$ lasso regressions for each response with the same optimal tuning parameter $\hat\lambda_0$ selected with a cross validation.
Let $\hat B_{\hat\lambda_0}$ be the solution.
We refer this as combined lasso initializer.
We initialize the algorithm from $\hat B^{(0)}=\hat B_{\hat\lambda_0}$. 
We set the convergence tolerance parameter $\varepsilon=10^{-7}$.
We summarize the algorithmic procedure in Algorithm \ref{MRCS} below.
Steps 1 and 2 both guarantee a decrease in the objective function value.

\begin{algorithm}
 \caption{Multivariate Regression with Compound Symmetry [MRCS]}
\begin{algorithmic}\label{MRCS}
  \scriptsize
  \STATE For each fixed value of $\lambda$, initialize $\hat B^{(0)}=\hat B_{\hat\lambda_0}$. Set $k=0$ and $((\hat\eta^2)^{(0)},\hat\theta^{(0)})=(1,0)$.
  \STATE {\bf Step 1:} Compute $((\hat\eta^2)^{(k+1)},\hat\theta^{(k+1)})(\hat B^{(k)})$ by \eqref{eta.theta.solver}.
  \STATE {\bf Step 2:} Compute $\hat B^{(k+1)}((\hat\eta^2)^{(k+1)},\hat\theta^{(k+1)})$ by \eqref{beta.update}.
  \STATE {\bf Step 3:} If $|F_\lambda(\hat B^{(k+1)},(\hat\eta^2)^{(k+1)},\hat\theta^{(k+1)})-F_\lambda(\hat B^{(k)},(\hat\eta^2)^{(k)},\hat\theta^{(k)})|<\varepsilon \tr(Y'Y)/n$ then stop. Otherwise go to Step 1 and $k\gets k+1$.
\end{algorithmic}   
\end{algorithm}

\subsection{Algorithm for approximate computation}\label{sec:appx.algo.1}
When $p\ge n$, since the residual sample covariance matrix, $(Y-XB)^T(Y-XB)/n$, is singular or near-singular, the canonical MRCS algorithm's alternating updates can encounter difficulties.
For example, when there exists $\Bar{B}$ such that any one of the columns of $Y-X\Bar{B}$ is zero, then the objective function value for updating $\Omega$ can be made arbitrarily negative by increasing the corresponding diagonal element of $\Omega$ sufficiently, which implies that a global minimizer does not exist. 
In addition, the computational time of the blockwise coordinate descent for fitting both regression coefficients and the error precision matrix tends to be significantly higher in increased dimensions of $p,\,q$ \citep{MRCE}.

To improve computational efficiency and ensure the well-posed optimization problem in high dimensions, we provide an approximate solution to our procedure by following the idea of approximate MRCE [ap.MRCE] proposed in \citet{MRCE}.
ap.MRCE addresses these issues by first obtaining an initial sparse estimate of $\Omega$ and then performing a single update of $B$, with simulations demonstrating its competitive performance.
Similarly, our approximate algorithm limits the number of updates to reduce instability and improve computational speed, while preserving accurate estimation of $B$.

As in Algorithm 1, we compute the initial $\hat B^{(0)}=\hat B_{\hat\lambda_0}$.
Then, for each $\lambda$, we compute $(\hat\eta^2(\hat B_{\hat\lambda_0}),\hat\theta(\hat B_{\hat\lambda_0}))$ by \eqref{eta.theta.solver}.
After this step for (inverse) covariance update, we compute the proposed approximate solution $\hat B$ with known $(\hat\eta^2(\hat B_{\hat\lambda_0}),\hat\theta(\hat B_{\hat\lambda_0}))$ by \eqref{beta.update}.
We suppress the dependence of $\hat B$ and $(\hat\eta^2(\hat B_{\hat\lambda_0}),\hat\theta(\hat B_{\hat\lambda_0}))$ on $\lambda$ for notational simplicity.
We refer the approximate solution to \eqref{main.objective.0} as ap.MRCS.
We summarize the procedure in Algorithm \ref{apMRCS} below.

\begin{algorithm}
 \caption{Approximated version of Multivariate Regression with Compound Symmetry [ap.MRCS]}
\begin{algorithmic}\label{apMRCS}
  \scriptsize
  \STATE For each fixed value of $\lambda$, initialize $\hat B^{(0)}=\hat B_{\hat\lambda_0}$.
  \STATE {\bf Step 1:} Compute $(\hat\eta^2,\hat\theta)(\hat B^{(0)})$ by \eqref{eta.theta.solver}.
  \STATE {\bf Step 2:} Compute $\hat B(\hat\eta^2,\hat\theta)$ by \eqref{beta.update}.
\end{algorithmic}   
\end{algorithm}

\section{Extension to general equicorrelation covariance matrix}\label{sec:gen}

When the response components are on different scales, then it may be unreasonable to assume
that the components of the errors have the same variance $\eta_{*}^2$.  We extend our model by allowing different $\eta_{*j}$'s for each response component. 
The model follows \eqref{model.setup} except that we use the following error covariance matrix:
\begin{align*}
\Sigma_*&= {\rm diag}(\{\eta_{*j}\}_{j=1}^q) \left[(1-\theta)I_q+\theta 1_q1_q^T\right] {\rm diag}(\{\eta_{*j}\}_{j=1}^q),
\end{align*}
where 
$(\eta_{*1},\ldots, \eta_{*q}) \in (0,\infty)^q$
are the unknown standard deviations of the $q$ marginal distributions of the error, i.e. 
${\rm var}(\epsilon_{i,j})=\eta_{*j}^2$ for $(i,j)\in\{1,\ldots, n\}\times\{1,\ldots,q\}$.  
The inverse error covariance matrix is
\begin{align}\label{general.precision}
\Omega_*=\frac{1}{1-\theta_*}{\rm diag}(\{\eta_{*i}^{-1}\}_{i=1}^q)\left[I_q-\frac{\theta_*}{1+(q-1)\theta_*}1_q1_q^T\right]{\rm diag}(\{\eta_{*i}^{-1}\}_{i=1}^q).    
\end{align}
Then the corresponding penalized negative log-likelihood optimization is 
\begin{equation}\label{main.objective.gen}
\begin{split}
\argmin_{(B,\{\eta_i\}_{i=1}^q, \theta)\in\mathbb{R}^{p\times q} \times \mathbb{R}_{+}^q \times [0,1)} F^{gen}_\lambda(B,\{\eta_i\}_{i=1}^q, \theta;Y,X),
\end{split}
\end{equation}
where
\begin{equation*}
\begin{split}
F^{gen}_\lambda(B,\{\eta_i\}_{i=1}^q, \theta;Y,X)&=\frac{1}{n(1-\theta)}\|(Y-XB){\rm diag}(\{\eta_i^{-1}\})\|_F^2\\
 &\indent-\frac{\theta}{n(1-\theta)(1-\theta+q\theta)}\|[(Y-XB){\rm diag}(\{\eta_i^{-1}\})] 1_q\|^2\\
 &\indent+(q-1)\log(1-\theta)+\log(1+\{q-1\}\theta)+2\sum_{i=1}^q\log(\eta_i) \\
&\indent+\lambda \sum_{j=1}^p \sum_{k=1}^q |B_{jk}|,
\end{split}
\end{equation*}
and $\lambda \in [0,\infty)$ is a tuning parameter.
As in Algorithm \ref{sec:appx.algo.1}, we treat $(\{\eta_i\}_{i=1}^q,\theta)$ as the first block and $B$ as the second block in blockwise coordinate descent.

For a fixed $\hat B^{(k)}$, we solve for $(\{\eta_i\}_{i=1}^q,\theta)$ cyclically.
By simple algebra (see \ref{sec:appd.derivation}), we compute the update for $\eta_j$ by
\begin{align}\label{rhos.update}
\hat\eta_j^{(k+1)}=\frac{-K_1^{(k)}+\sqrt{(K_1^{(k)})^2+4K^{(k)}_2}}{2},
\end{align}
where
\begin{align*}
&K_1^{(k)}= \frac{\hat\theta^{(k)}}{n(1-\hat\theta^{(k)})(1+(q-1)\hat\theta^{(k)})}\sum_{i=1}^n e_{ij}^{(k)}\left(\sum_{k\neq j}e_{ik}^{(k)}\frac{1}{\hat\eta^{(k)}_k}\right),\\
&K_2^{(k)}= \frac{1+(q-2)\hat\theta^{(k)}}{n(1+(q-1)\hat\theta^{(k)})(1-\hat\theta^{(k)})}\left(\sum_{i=1}^n (e_{ij}^{(k)})^2\right),
\end{align*}
and $e_{ij}^{(k)}$ is the $(i,j)$-th element of $Y-X\hat B^{(k)}$.
After sequentially solving for $\{\eta_i\}_{i=1}^q$, we solve the following by line search:
\begin{equation}\label{theta.opt.problem}
\begin{split}
\hat\theta^{(k+1)}=\argmin_{\theta\in [0,1)} &\frac{1}{n(1-\theta)}\|(Y-X\hat B^{(k)}){\rm diag}(\{(\hat\eta_i^{(k+1)})^{-1}\})\|_F^2\\
&\indent -\frac{\theta}{n(1-\theta)(1-\theta+q\theta)}\|[(Y-X\hat B^{(k)}){\rm diag}(\{(\hat\eta_i^{(k+1)})^{-1}\})] 1_q\|^2\\
 &\indent+(q-1)\log(1-\theta)+\log(1+\{q-1\}\theta).
\end{split}
\end{equation}
For efficiency, we do not require full convergence in each inner loop over $(\{\eta_i\}_{i=1}^q,\theta)$. Instead, we conduct single-iteration updates for each $\{\eta_i\}_{i=1}^q$ and $\theta$ by \eqref{rhos.update} and \eqref{theta.opt.problem}. 

We compute $\hat B^{(k+1)}$ using 
\eqref{beta.update} with
$\hat\Omega^{(k+1)}$ set to the right side of
\eqref{general.precision} with 
$\{\eta_i\}_{i=1}^q$ and $\theta$ 
replaced by $\{\hat\eta_i^{(k+1)}\}_{i=1}^q$ and $\hat\theta^{(k+1)}$, respectively.
We call the solution to \eqref{main.objective.gen} MRGCS.

We again start the algorithm with combined lasso initializer as in Algorithm \ref{MRCS}.
We compared the combined lasso initializer with
a separate lasso initializer that fits $q$ separate lasso regressions for each response with the different tuning parameters selected by cross validation for each response. 
However, we found that the combined lasso initializer generally performed better than the separate lasso except for a few cases.
We again set the convergence tolerance parameter $\varepsilon=10^{-7}$ for the following Algorithm \ref{MRGCS} which summarizes the entire procedure.

\begin{algorithm}
 \caption{Multivariate Regression with Generalized Compound Symmetry [MRGCS]}
\begin{algorithmic}\label{MRGCS}
  \scriptsize
  \STATE For each fixed value of $\lambda$, initialize $\hat B^{(0)}=\hat B_{\hat\lambda_0}$. Set $k=0$ and $(\{\hat\eta_i^{(0)}\}_{i=1}^q,\hat\theta^{(0)})=(1_q,0)$.
  \STATE {\bf Step 1:} Compute one-step update of $(\{\hat\eta_i^{(k+1)}\}_{i=1}^q,\hat\theta^{(k+1)})(\hat B^{(k)})$ by solving \eqref{rhos.update}--\eqref{theta.opt.problem} cyclically.
  \STATE {\bf Step 2:} Compute $\hat B^{(k+1)}(\{\hat\eta_i^{(k+1)}\}_{i=1}^q,\hat\theta^{(k+1)})$ by solving \eqref{beta.update}.
  \STATE {\bf Step 3:} If $|F^{gen}_\lambda(\hat B^{(k+1)},\{\hat\eta_i^{(k+1)}\}_{i=1}^q,\hat\theta^{(k+1)})-F^{gen}_\lambda(\hat B^{(k)},\{\hat\eta_i^{(k)}\}_{i=1}^q,\hat\theta^{(k)})|<\varepsilon \tr(Y'Y)/n$ then stop. Otherwise go to Step 1 with $k\gets k+1$.
\end{algorithmic}   
\end{algorithm}

\subsection{Approximate solutions II}\label{sec:appx.algo.2}
As in Section \ref{sec:appx.algo.1}, we propose an approximate solution to \eqref{main.objective.gen}.
Again, we start the algorithm with $\hat B^{(0)}=\hat B_{\hat\lambda_0}$, the combined lasso initializer. 
Then, for each $\lambda$, we compute $(\{\hat\eta_i\}_{i=1}^q,\hat\theta)$ from \eqref{main.objective.gen}.
However, the difference between this approximation and the canonical MRGCS is that we conduct the updating step for $(\{\hat\eta_i\}_{i=1}^q,\hat\theta)$ until convergence instead of a single-iteration update.
This subproblem is obviously convex in $(\{\eta_i\}_{i=1}^q,\theta)$.
Next, we compute the final $\hat B$ by solving \eqref{beta.update} once. 
We call this approximation ap.MRGCS.
We summarize it in Algorithm \ref{ap.MRGCS} below.

We conducted additional simulations comparing MRCS, MRGCS, and their approximate versions under two initializers, combined and separate lasso, using a much finer grid of tuning parameters (see Section 1.3 of the supplementary material). 
Results show that initialization has little impact on MRCS and ap.MRCS, whereas MRGCS and ap.MRGCS are sensitive under high sparsity, where separate lasso performs better. 
Based on the favorable performance in our simulation settings and the computational simplicity, we adopted the combined lasso initializer. 
However, in practice, we recommend practitioners to evaluate both initialization strategies and choose the one that achieves the best performance via cross-validation.

\begin{algorithm}
 \caption{Approximated version of Multivariate Regression with generalized Compound Symmetry [ap.MRGCS]}
\begin{algorithmic}\label{ap.MRGCS}
  \scriptsize
  \STATE For each fixed value of $\lambda$, initialize $\hat B^{(0)}=\hat B_{\hat\lambda_0}$.
  \STATE {\bf Step 1:} Compute $(\{\hat\eta_i\}_{i=1}^q,\hat\theta)(\hat B^{(0)})$ cyclically by solving \eqref{rhos.update} and \eqref{theta.opt.problem} until convergence.
  \STATE {\bf Step 2:} Compute $\hat B(\{\hat\eta_i\}_{i=1}^q,\hat\theta)$ by solving \eqref{beta.update}.
\end{algorithmic}   
\end{algorithm}

\section{Asymptotic statistical properties}\label{sec:classic.theory}
We study large sample asymptotic behavior of our estimator MRCS defined by \eqref{main.objective.0}
using the same asymptotic framework as \citet{lee2012simultaneous}.
We keep $p$ and $q$ fixed with $n>p+q$ throughout this section.
As stated in Proposition 1 of \citet{zou2006adaptive}, the equally-weighted lasso-penalized estimator does not possess the oracle property.  Thus, to ensure the oracle property, it is encouraged to assign weights, $w_{jk}$, to each $|B_{jk}|$ in the penalty term. 
We let $w_{jk}=1/|B^{({\rm ols})}_{jk}|^r$ ($r>1$) be the weight for $|B_{jk}|$, where $B^{({\rm ols})}_{jk}$ is $(j,k)$-th element of the regression coefficient matrix obtained from ordinary least squares.
Using these weights, the slightly modified optimization (than \eqref{main.objective.0}) we consider in this section is as follows:
\begin{equation}\label{main.objective.weight}
\begin{split}
\argmin_{(B, \eta^2, \theta)\in\mathbb{R}^{p\times q} \times (0,\infty) \times [0,1)}
 &\frac{1}{\eta^2(1-\theta)}\tr\left\{n^{-1}(Y-XB)^T(Y-XB)\left[I_q-\frac{\theta}{1-\theta+q\theta}1_q1^T_q\right] \right\}\\
 &\quad+(q-1)\log(1-\theta)+\log(1+\{q-1\}\theta)+q\log(\eta^2)\\
&\quad+\lambda \sum_{j=1}^p \sum_{k=1}^q w_{jk}|B_{jk}|.
\end{split}
\end{equation}
We make the following assumptions to state large sample asymptotics of MRCS.
\begin{assumptionp}{A}\label{assm:A}\phantom{blah}
\begin{enumerate}
\myitem{(A1)}\label{assm:A1}: $X^TX/n\rightarrow Z$, where $Z$ is a positive definite matrix. 
\myitem{(A2)}\label{assm:A2}: There exists $\Tilde{B}_{jk}$ a $\sqrt{n}$-consistent estimator of $B_{jk*}$, where $B_{jk*}$ is the $(j,k)$-th element of $B_*$ for $(j,k)\in\{1,\ldots,p\}\times\{1,\ldots,q\}$. 
\myitem{(A3)}\label{assm:A3}: There exists $(\Tilde{\eta}^2,\Tilde{\theta})$ $\sqrt{n}$-consistent estimator of $(\eta_*^2,\theta_*)$.
\myitem{(A4)}\label{assm:A4}: The distribution of $E$ has finite joint fourth moments.
\end{enumerate}
\end{assumptionp}
Condition \ref{assm:A1} is a standard condition in linear regression large sample asymptotics literature, and was also assumed in \citet{zou2006adaptive}, \citet{lee2012simultaneous}, and \citet{chang2023robust}.
It implies that the design matrix has good asymptotic behavior.
We note that \ref{assm:A2} and \ref{assm:A3} are generally satisfied by maximum likelihood estimators [MLEs] or $L_2$-penalized MLEs.
\ref{assm:A4} is satisfied by broad class of error distributions such as multivariate sub-exponential distributions.

For a matrix $R\in\mathbb{R}^{p\times q}$, we let ${\rm vec}(R)\in\mathbb{R}^{pq}$ be the vector formed by stacking the columns of $R$.
Define $S:=\{i:{\rm vec}(B_*)_i\neq 0\}$.
Let $v_A$ be the subvector of the
of the entries in $v$ with indices in $A$.
For a square matrix $M\in\mathbb{R}^{q\times q}$, we further define $M_A$ as the $|A|\times |A|$ matrix obtained by removing the $i$-th row and column of M for $i\in A^c$, where $A \subseteq \{1,\ldots,q\}$.
We let $\otimes$ denote the Kronecker product of two matrices.
Recall that the inverse of compound symmetry error covariance is computed as
\begin{align*}
\Omega_*=\frac{1}{\eta^2_*(1-\theta_*)}\left[I_q-\frac{\theta_*}{1+(q-1)\theta_*}1_q1_q^T\right].
\end{align*}

In the following theorem, we provide the oracle guarantee, the limit distribution of $\hat B$, and the joint limit distribution of $(\hat \eta^2,\hat\theta)$.
This result is analogous to Theorem 3 of \citet{lee2012simultaneous}.
\begin{theorem}\label{thm:classic.anal}
Under the conditions \ref{assm:A1}--\ref{assm:A4}, assuming that $\lambda\sqrt{n}\rightarrow0$ and $\lambda n^{(r+1)/2}\rightarrow \infty$, then, there exists a local minimizer $(\hat B,\hat\eta^2,\hat\theta)$ to the optimization problem \eqref{main.objective.weight} that satisfies
\begin{align*}
&\lim_{n\rightarrow \infty} P(\hat B_{jk}=0)=1\qquad \qquad {\rm if~}B_{jk*}=0,\\
&\sqrt{n}({\rm vec}(\hat B)_{S}-{\rm vec}(B_*)_{S})\rightarrow_d N(0,D^{-1}),\\
&\left\|(\hat\eta^2,\hat\theta)-(\eta_*^2,\theta_*)\right\|=O_p(1/\sqrt{n}),
\end{align*}
where $D=(\Omega_* \otimes Z)_{S}$.
Furthermore, if $\theta_*\in (0,1)$, then
\begin{align*}
\sqrt{n}((\hat\eta^2,\hat\theta)-(\eta_*^2,\theta_*))\rightarrow_d W^TN_2(0,V),
\end{align*}
where $W=(1,-1/\eta_*^2)^T$, and $V$ is defined in \ref{appd:proof.classic} (see \eqref{eq:V.elements}).
\end{theorem}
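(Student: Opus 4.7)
The plan is to follow the adaptive-lasso oracle template of Fan--Li (2001) and Zou (2006), adapted to the joint regression/covariance setting of \citet{lee2012simultaneous}. The argument has three pieces: (i) existence of a $\sqrt{n}$-consistent local minimizer via the ``for $C$ large enough, $\inf_{\|(u,v)\|=C} G_n > 0$'' device on a shrinking neighborhood around the truth; (ii) model-selection consistency on $S^c$ driven by the blowing-up adaptive weights; and (iii) the delta method applied to the closed-form updates \eqref{eq:alpha.form}--\eqref{eq:gamma.form} to obtain the joint limit of $(\hat\eta^2,\hat\theta)$.

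For step (i) I would reparameterize $B = B_* + u/\sqrt{n}$, $(\eta^2,\theta) = (\eta_*^2,\theta_*) + v/\sqrt{n}$ and Taylor-expand the objective in \eqref{main.objective.weight} to second order. Under \ref{assm:A1} and \ref{assm:A4}, CLTs for $n^{-1/2} X^T E$ and for $n^{-1} E^T E$ make the score pieces $O_p(1)$, and the smooth Hessian converges to a positive-definite limit whose $B$-block equals $\Omega_* \otimes Z$. The penalty contribution from $(j,k) \in S$ is at most $\lambda\sqrt{n}\sum_S w_{jk}|u_{jk}|$, which is $o(1)$ by \ref{assm:A2} and $\lambda\sqrt{n} \to 0$; hence for $C$ large the quadratic term dominates on the sphere, producing an interior local minimizer with $\|(\hat u,\hat v)\| = O_p(1)$, which yields the stated joint $\sqrt{n}$-consistency. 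For step (ii), fix $(j,k) \notin S$: by \ref{assm:A2}, $w_{jk} = |B^{({\rm ols})}_{jk}|^{-r} = O_p(n^{r/2})$, so $\lambda\sqrt{n}\, w_{jk} \asymp \lambda n^{(r+1)/2} \to \infty$, while the smooth subgradient in direction $B_{jk}$ is $O_p(1)$; the KKT condition then forces $\hat B_{jk} = 0$ with probability tending to one. Restricted to $S$, the first-order conditions reduce to the unpenalized joint-MLE equations up to $o_p(n^{-1/2})$, giving $\sqrt{n}(\VEC(\hat B)_S - \VEC(B_*)_S) \to_d N(0, D^{-1})$ with $D = (\Omega_* \otimes Z)_S$.

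For step (iii), when $\theta_* \in (0,1)$ the closed-form updates make $(\hat\alpha,\hat\gamma)$ smooth explicit functions of $M_1 = n^{-1}\|Y - X\hat B\|_F^2$ and $M_2 = n^{-1}\|(Y - X\hat B)1_q\|^2$, with the $\max$ in \eqref{eq:gamma.form} inactive at the population value $(\alpha_*,\gamma_*)$. Since $\hat B = B_* + O_p(n^{-1/2})$ and $X^T E = O_p(\sqrt{n})$, the replacement of $B_*$ by $\hat B$ contributes only $o_p(n^{-1/2})$ to each $M_j$; a standard CLT for quadratic forms in $E$ under \ref{assm:A4} then gives $\sqrt{n}((M_1, M_2) - \E(M_1, M_2)) \to_d N_2(0, V)$. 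Composing with the smooth map $(M_1, M_2) \mapsto (\hat\alpha, \hat\gamma) \mapsto (\hat\eta^2, \hat\theta)$ and applying the delta method at $(\alpha_*, \gamma_*)$ then yields the stated limit, with $W$ obtained from the Jacobian of the composite map.

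The main obstacle is controlling the non-convexity of the objective in $(\eta^2,\theta)$ and the boundary $\theta < 1$ inside step (i); without care, the unreparameterized problem admits spurious near-boundary critical points and the Taylor expansion can fail. The $(\alpha,\gamma)$ reparameterization introduced in Section~3 rescues this step, since each $(\alpha,\gamma)$-subproblem is strictly convex with an interior minimum ensured by Cauchy--Schwarz ($M_1 \geq M_2/q$); after verifying that $(\alpha,\gamma) \mapsto (\eta^2, \theta)$ is a local diffeomorphism at $(\alpha_*,\gamma_*)$, the quadratic expansion and the delta-method conclusion transfer cleanly back to $(\eta^2,\theta)$. The boundary case $\theta_* = 0$ is excluded from the joint CLT precisely because $\hat\theta \in [0,1)$ cannot fluctuate symmetrically around zero.
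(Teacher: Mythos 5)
Your proposal is correct and follows essentially the same route as the paper: the paper organizes the argument as two cross-conditioned lemmas (properties of $\hat B$ given a $\sqrt{n}$-consistent $(\hat\eta^2,\hat\theta)$, and vice versa, deferring the Fan--Li/Zou sphere device and the KKT/adaptive-weight argument to the proofs in \citet{lee2012simultaneous}), plus exactly your step (iii) — the delta method with $g(x,y)=(x,(x-y)/x)$ applied to the closed-form $(\hat\alpha,\hat\gamma)$ updates, with the $\max$ in \eqref{eq:gamma.form} shown inactive when $\theta_*\in(0,1)$. The only detail you gloss over that the paper handles explicitly is the $\theta_*=0$ case of the $O_p(1/\sqrt{n})$ rate, obtained by bounding $|\hat\delta-\hat\eta^2|$ and $|\hat\alpha(1/\hat\eta^2-1/\hat\delta)|$ rather than by the joint expansion.
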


\section{Tuning parameter selection}\label{sec:tuning.parameter}

We suggest the following 
tuning parameter selection procedure for MRCS, ap.MRCS, MRGCS, and ap.MRGCS:
\begin{align}\label{eq:tuning.selection.ours}
\hat\lambda=\argmin_{\lambda\in\Lambda} \sum_{k=1}^{K}\tr\left\{n^{-1}(Y_k-X_k\hat B_{-k})^T (Y_k-X_k\hat B_{-k})\hat\Omega_{-k} \right\},
\end{align}
where 
\begin{align*}
\hat\Omega_{-k}=\left[\hat\eta_{-k}\left\{\hat\theta_{-k} 1_q 1_{q}^T + (1-\hat\theta_{-k}) I_q \right\}\right]^{-1}.
\end{align*}
In \eqref{eq:tuning.selection.ours}, $\Lambda$ is a candidate set of tuning parameter values, $K$ is the number of folds used for the cross validation, $\hat B_{-k}$ is the estimated regression coefficient matrix from each method based on the observations that excludes the $k$th fold, $Y_k,\,X_k$ are the responses and predictors corresponding to $k$th fold, and $\hat\eta_{-k},\,\hat\theta_{-k}$ are the estimated parameters based on the observations that excludes the $k$th fold.
The loss function in \eqref{eq:tuning.selection.ours} is the negative Gaussian validation loglikelihood
without the log determinant term.  
Including this term was slightly less stable than 
excluding it in our numerical experiments. 
Similar Gaussian validation likelihood minimization is presented in \citet{lee2012simultaneous}, which used this loss criterion for finding an optimal tuning parameter associated graphical lasso applied to the error precision. 

We note that we do not use $\hat\eta_{-k},\,\hat\theta_{-k}$ from the output of MRCS.
Rather we use the one-step estimators used for ap.MRCS that are computed in the Step 1 of Algorithm \ref{sec:appx.algo.1}.
We use the same $\hat\eta_{-k},\,\hat\theta_{-k}$ for both MRCS and ap.MRCS.
This is because the stability of ap.MRCS output turned out to be significantly better than that of the canonical MRCS.
Furthermore,  we use $\hat\eta_{-k},\,\hat\theta_{-k}$ which assumes compound symmetry error covariance even when we fit MRGCS or ap.MRGCS which assume varying marginal error variances.
This is because the output of ap.MRCS was significantly more stable in the tuning parameter selection compared with that of ap.MRGCS as well as the MRGCS output.
We provide further support regarding this tuning parameter selection procedure through extensive simulations in Section 1.1 and 1.2 of the supplementary material \citep{MRCS.supp}.

\section{Simulations}\label{sec:simul}
The data for our simulations are generated 
from the following model:
\begin{align}\label{eq:dat.gen.model}
Y_i=\mu+X_iB_*+\epsilon_i,\quad X_i\sim N(0,\Sigma_X)\in\mathbb{R}^p,\quad \epsilon_i\sim N(0,\Sigma_*)\in\mathbb{R}^q,\quad i=1,\ldots,n
\end{align}
where $\mu=(1,1+4/(q-1),\ldots,5)$, $(\Sigma_X)_{i,j}=(0.7)^{|i-j|}$ for $1\le i,j \le p$, and
\begin{align}\label{eq:dat.gen.error}
\Sigma_*={\rm diag}(\{\eta_k\}_{k=1}^q) \left[(1-\theta)I_q+\theta 1_q1_q^T\right] {\rm diag}(\{\eta_k\}_{k=1}^q).
\end{align}
We use $n=50$ training observations throughout.
To generate $B_*$, we follow the same regression coefficient matrix generating procedure used in Section 3 of \citet{MRCE}. 
We define the operation $*$ as the element-wise matrix product.
The coefficients matrix $B_*$ is generated by
\begin{align}\label{default.genB}
    B_* = W * K * Q,
\end{align}
where $W$ has iid entries from $N(0,1)$; $K$ has entries from iid $Ber(s_1)$ (a Bernoulli distribution which returns $1$ with probability $s_1$); $Q=1_p1_q^TQ_1$ where a $q\times q$ diagonal matrix $Q_1$ has diagonal elements from iid $Ber(s_2)$.
This results in, $Q$ has rows that are either all one or all zero, where the population proportion of having all-one row ($1_q'$) equals to $s_2$.
Under this setting each model is expected to have $(1-s_2)p$ predictors that are irrelevant for all $q$ responses, and each relevant predictor is expected to have contribution to $s_1q$ of the response variables.

We compare a separate lasso regression with a uniquely selected tuning parameter for each response [LASP], combined lasso [LAS] which employs the same tuning parameter for all responses, MRCE, approximate MRCE [ap.MRCE] \citep{MRCE}, MRCS (the solution to \eqref{main.objective.0}), MRGCS (the solution to \eqref{main.objective.gen}).
In addition to the canonical estimators, MRCS and MRGCS, we also consider their approximate versions: ap.MRCS (see Section \ref{sec:appx.algo.1}), as well as ap.MRGCS (see Section \ref{sec:appx.algo.2}).
Moreover, we compare these methods to an oracle procedure that assumes the true error precision $\Omega_*=\Sigma_*^{-1}$ is known, and only estimates $B$ by $L_1$-penalty.
\begin{equation}\label{eq:MRCS-Or}
\begin{split}
\hat B_{\rm Or}:=\argmin_{B\in\mathbb{R}^{p\times q}}
 & \tr\left\{n^{-1}(Y-XB)^T(Y-XB)\Omega_* \right\}+ \lambda \sum_{j=1}^p \sum_{k=1}^q |B_{jk}|.
\end{split}
\end{equation} 
We refer this estimator as MRCS-Or.
In MRCE, we penalized the diagonals of the inverse covariance matrix only when $p\geq n$.

Tuning parameters are selected using 5 fold cross validation from $\lambda\in\Lambda$, where $\Lambda=\{10^{-4+0.5k}:k=0,1,\ldots,14\}$.
The optimal tuning parameter selection procedures for our methods are discussed in Section \ref{sec:tuning.parameter} (see \eqref{eq:tuning.selection.ours}).
The optimal tuning parameter for MRCS-Or is also selected by \eqref{eq:tuning.selection.ours} except that we use $\Omega_*$ in place of $\hat\Omega_{-k}$ in \eqref{eq:tuning.selection.ours}.
For the other competitors, we select each tuning parameter that minimizes validation prediction error with 5 fold cross validation.
For the estimators which require a single tuning parameter, we again use the candidate set as $\lambda\in \{10^{-4+0.5k}:k=0,1,\ldots,14\}$.
And for MRCE, we use $(\lambda_1,\lambda_2)\in\Lambda_1\times \Lambda_2$, where $\Lambda_1=\Lambda_2=\{10^{-4+0.5k}:k=0,1,\ldots,14\}$.

For the primary criterion for the model comparison, we measure the model error, ${\rm tr}[(\hat B-B_*)^T \Sigma_X (\hat B-B_*)]$ \citep{breiman1997predicting,yuan2007model} with $\hat B$ provided by each method.
We also measure the prediction error, $\|\hat Y-Y\|_F^2$, on the test set which has 200 observations which are generated from the same data generating process as in the training set.
We further measure the true negative rate [TNR] and true positive rate [TPR] for the regression coefficient matrix estimation as follows \citep{MRCE}:
\begin{align}\label{TNR}
{\rm TNR}(\hat B)=\frac{\#\{(i,j)\in[p]\times[q]:\hat B_{ij}=0,\,B_{ij*}=0\}}{\#\{(i,j)\in[p]\times[q]:B_{ij*}=0\}},   
\end{align}
\begin{align}\label{TPR}
{\rm TPR}(\hat B)=\frac{\#\{(i,j)\in[p]\times[q]:\hat B_{ij}\neq 0,\,B_{ij*}\neq 0\}}{\#\{(i,j)\in[p]\times[q]:B_{ij*}\neq 0\}}, \end{align}
where $[k]=\{1,\ldots,k\}$ for $k\in\mathbb{N}$; $\# A$ stands for the number of elements in the set $A$.

\subsection{Setting I: Constant $\eta$}\label{sec:fixed.eta.simul}
In this section, we consider $(p,q)\in\{(20,50),(50,20),(80,80)\}$,
We refer the readers to Section 3.1 in the supplementary material \citep{MRCS.supp} for the results of low dimensional simulations ($p=q=20$).
We vary $s_1\in\{0.1,0.5\},\,s_2\in\{0.1,0.5,1\},\theta\in\{0,0.5,0.75,0.9,0.95\}$, and fix $\eta_i=1$ for all $i=1,\ldots,q$. 
We drop MRCE due to its substantially high computation time.
When $p\geq n$ we exclude MRCS, MRGCS from the set of competitors, where we still consider ap.MRCS, and ap.MRGCS, since they can avoid the residual covariance instability (see Section \ref{sec:appx.algo.1}).
The results for each setting are based on 50 independent replications.
Throughout the results, the trend across the estimators in the prediction error is nearly equivalent to that in the model error.
Hence, we mainly discuss the model error as our primary criterion.

\subsubsection{Results when $(p,q)=(20,50)$}\label{sec:highdim.pt1}

Complete simulation results are provided in Figure 26--29 of the supplementary material \citep{MRCS.supp}. 
As $\theta$ increases, all equicorrelation-based estimators consistently outperform ap.MRCE and the two lasso methods. 
Among non-oracle estimators, MRCS and ap.MRCS performed best, with MRGCS and ap.MRGCS also competitive. 
For TNR, sep.lasso achieved the highest rates, followed by MRCS and ap.MRCS, while ap.MRCE performed worst. 
In contrast, for TPR, the lasso methods were generally poorest, and all joint optimization methods achieved higher TPR as $\theta$ increased, indicating their tendency to provide denser solutions.

\subsubsection{Results when $(p,q)=(50,20)$}\label{sec:highdim.pt2}

Complete simulation results are in Figure 30--33 of the supplementary material \citep{MRCS.supp}. 
ap.MRCS was the best non-oracle estimator, with ap.MRGCS performing similarly except for $(s_1,s_2)=(0.5,0.1)$, where ap.MRCE outperformed it when $\theta=0$ and $0.95$. 
Overall, our non-oracle estimators outperformed ap.MRCE and the lasso methods. 
The TNR/TPR patterns were similar to those for $(p,q)=(20,50)$, though ap.MRCE showed improved TNR.

\subsubsection{Results when $(p,q)=(80,80)$}\label{sec:highdim.pt3}

The computing time for ap.MRCE was substantially higher than that of other procedures, particularly when the tuning parameter associated with the graphical lasso subproblem is close to $10^{-4}$. 
To address this, we restricted the tuning parameter set for the graphical lasso penalty into $\{10^{-2+0.5k}:k=0,1,\ldots,8\}$. 
This setup was also used for ap.MRCE in the simulations with $(p,q)=(80,80)$ in Section \ref{sec:varied.eta.asym} and Section 3.2 of the supplementary material \citep{MRCS.supp}. 
Complete results are provided in Figure 34--37 of the supplementary material. Among the non-oracle estimators, ap.MRCS performed best, while ap.MRGCS only underperformed when $(s_1,s_2)=(0.5,0.1)$ and remained competitive otherwise. In contrast, ap.MRCE exhibited significantly poorer performance for this $(p,q)$ setting.

Representative model error comparison plots for the case $(s_1,s_2) = (0.5,0.5)$ under the above three $(p,q)$ settings are illustrated in Figure \ref{fig:section7-fig1}--\ref{ffig:section7-fig3}.

\subsection{Setting II: Equicorrelation of $\Sigma_*$ with heterogeneous and asymmetrically distributed $\eta_i$'s}\label{sec:varied.eta.asym}
In this section, we study the performance of our equicorrelation-based estimators under generalized $\Sigma_*$ in which $\eta_i$'s are now heterogeneous.
In Section 3.2 of the supplementary material \citep{MRCS.supp}, we compared our methods to the others in the settings where $\eta_i$'s are still heterogeneous but symmetrically distributed.
We discovered that MRCS and ap.MRCS showed comparable performance to MRGCS and ap.MRGCS, the two best non-oracle methods in that setting.  
This may be due to the symmetrically distributed $\eta_i$'s ($i\in\{1,\ldots,q\}$) which can smooth out heterogeneous marginal error variances. 
Now we consider asymmetric cases in this section. 
We used the same data generating process \eqref{eq:dat.gen.model}--\eqref{default.genB} except that we considered
\begin{equation}\label{eq:varied.etas.asym.20.50}
\begin{split}
&\eta_1,\ldots,\eta_{10}=1/2,\, \eta_{11},\ldots,\eta_{20}=1/\sqrt{2},\,\eta_{21},\ldots,\eta_{30}=1,\\
&\eta_{31},\ldots,\eta_{40}=\sqrt{3},\, \eta_{41},\ldots,\eta_{50}=3, 
\end{split}
\end{equation}
when $(p,q)=(20,50)$, and 
\begin{equation}\label{eq:varied.etas.asym.50.20}
\begin{split}
&\eta_1,\ldots,\eta_{4}=1/2,\, \eta_{5},\ldots,\eta_{8}=1/\sqrt{2},\,\eta_{9},\ldots,\eta_{12}=1,\\
&\eta_{13},\ldots,\eta_{16}=\sqrt{3},\, \eta_{17},\ldots,\eta_{20}=3,
\end{split}
\end{equation}
when $(p,q)=(50,20)$.
Lastly, when $(p,q)=(80,80)$, we used
\begin{equation}\label{eq:varied.etas.asym.80.80}
\begin{split}
&\eta_1,\ldots,\eta_{10}=1/2,\, \eta_{11},\ldots,\eta_{20}=1/\sqrt{2},\,\eta_{21},\ldots,\eta_{30}=1/\sqrt[4]{2},\,\eta_{31},\ldots,\eta_{40}=1\\
&\eta_{41},\ldots,\eta_{50}=\sqrt{3},\, \eta_{51},\ldots,\eta_{65}=2,\,\eta_{66},\ldots,\eta_{80}=3.
\end{split}
\end{equation}

To see the effect of varying $\eta_i$ on the regression coefficient shrinkage, we also compared the original $L_1$-penalty on $B$, $\lambda\sum_{j,k}|B_{jk}|$, with an adaptive $L_1$-penalty, $\lambda\sum_{j,k}|\eta_k^{-1}B_{jk}|$, for MRCS-Or.
However, we did not find any supporting evidence on the use of the adaptive penalty instead of the original penalty. 
Thus, we still suggest the use of the original $L_1$-penalty function even in the case of varying $\eta_i$.

\subsubsection{Results when $(p,q)=(20,50)$}\label{sec:varied.asymm.pt1}

Complete results are shown in Figures 38–41 of the supplementary material \citep{MRCS.supp}. Among the non-oracle methods, MRGCS and ap.MRGCS performed best, with MRGCS slightly outperforming its approximate version. Unlike the symmetric setting detailed in Section 3.2.1 of the supplementary material \citep{MRCS.supp}, MRCS and ap.MRCS performed poorly under the asymmetric setting considered here. For larger $s_1s_2$ values, ap.MRCE even outperformed these two methods at higher $\theta$ values. The TNR/TPR patterns are similar to those in Section \ref{sec:highdim.pt1}.

\subsubsection{Results when $(p,q)=(50,20)$}\label{sec:varied.asymm.pt2}

Complete results are in Figure 42--45 of the supplementary material \citep{MRCS.supp}.
ap.MRGCS was the best non-oracle method, although it suffered again when $(s_1,s_2)=(0.5,0.1)$.
As opposed to the results in $(p,q)=(20,50)$, ap.MRCS was the second-best non-oracle competitor, and it outperformed ap.MRGCS when $s_2=0.1$.
ap.MRCE showed poor performance generally.
Note, the prediction performance of MRCS-Or, ap.MRGCS, and ap.MRCS was substantially better than the others.

\subsubsection{Results when $(p,q)=(80,80)$}\label{sec:varied.asymm.pt3}

Complete simulation results are in Figures 46–49 of the supplementary material \citep{MRCS.supp}. In this setting, ap.MRGCS was the best non-oracle estimator, except when $(s_1,s_2)=(0.5,0.1)$, where it was outperformed by ap.MRCE and sep.lasso (similar to the symmetric case in Section 3.2.3 of the supplementary material \citep{MRCS.supp}). ap.MRCS was also outperformed by ap.MRCE for some $\theta$ values when $(s_1,s_2)=(0.1,0.5)$ or $(0.1,1)$.

Representative model error comparison plots for the case $(s_1,s_2) = (0.5,0.5)$ under the above three $(p,q)$ settings are illustrated in Figure \ref{fig:section7-fig4}--\ref{fig:section7-fig6}.

\section{Simulation under the model misspecification}\label{sec:misspec.simul}
\subsection{Misspecification of $B_*$: When the true $B_*$ is non-sparse}\label{sec:dense.B}

In this section, we study the impact of a non-sparse true regression coefficient matrix on the proposed methods. As in previous simulations, we evaluate performance using prediction and model error under the same data-generating process given in \eqref{eq:dat.gen.model}–\eqref{eq:dat.gen.error}. We consider three settings for $\eta_i$'s in \eqref{eq:dat.gen.error}: constant (Setting A), heterogeneous and symmetric (Setting B), and heterogeneous and asymmetric (Setting C).

Unlike earlier experiments, we do not vary the sparsity of $B_*$. Rather than using \eqref{default.genB}, each element of $B_*$ is generated independently from a uniform distribution on $(-1/4,1/4)$ for $(p,q)=(20,50)$ and $(50,20)$, and on $(-1/10,1/10)$ for $(p,q)=(80,80)$. This design yields a non-sparse $B_*$ with many small nonzero signals, which is expected to challenge sep.lasso and comb.lasso.
We again set $\theta\in{0,0.5,0.75,0.9,0.95}$ and include the same competitors as in Section \ref{sec:simul}, with the addition of combined ridge [c.ridge] and separate ridge [s.ridge], defined analogously to combined and separate lasso.

Complete results are in Figure 50--51 in the supplementary material \citep{MRCS.supp}.
Under Setting A, we label the sub-settings $(p,q)=(20,50)$, $(50,20)$, and $(80,80)$ as A1, A2, and A3, respectively, and use similar labels for Settings B and C. 
Across all settings, MRGCS and ap.MRGCS were the best non-oracle methods, outperforming ap.MRCE as well as lasso and ridge estimators as $\theta$ increased. ap.MRCE ranked as the third-best non-oracle method.
In Setting A, with constant $\eta_i$’s, MRCS and ap.MRCS also performed competitively as the best non-oracle methods. However, unlike the sparse case with symmetric and heterogeneous $\eta_i$’s (see Section 3.2 of the supplementary material \citep{MRCS.supp}), both methods struggled under non-sparse designs (Settings B and C). 
The results from this section indicate that if non-sparse $B_*$ has sufficiently small (in absolute value) elements, then there is more gain by joint optimization than lose by misspecified $L_1$ penalty.

\subsection{Misspecification of $\Sigma_*$: Corrupted compound symmetry to general non-sparse covariance}\label{sec:misspec.omega}

In this section, we conduct simulation studies to study the effect of misspecified $\Sigma_*$ on our methods.
Specifically, we analyze how our equicorrelation-based methods perform when the true error covariance $\Sigma_*$ follows a corrupted compound symmetry structure, defined in \eqref{gen.cov.form} below.
Data are generated using the same procedure as in Section \ref{sec:simul}, except that we consider different sparsity levels for $B_*$ (described below) and replace $\Sigma_*$ with the corrupted compound symmetry structure computed as follows:
\begin{align}\label{gen.cov.form}
\Sigma_*=(1-\omega) \left[0.5\left((1-0.9)I_q+0.91_q1_q^T\right)\right] + \omega VDV^{T},
\end{align}
where the columns of $V\in\mathbb{R}^{q\times q}$, $v_1,\ldots,v_{q}$ are generated by the Gram-Schmidt orthogonalization of a $q\times q$ random matrix whose entries are iid standard normal draws; $D\in \mathbb{R}^{q\times q}$ is a diagonal matrix.
For the diagonal entries of $D$, we generated iid draws from a two point distribution that is explained below. Through this we vary the condition number of $\Sigma_*$. 
\eqref{gen.cov.form} can be interpreted as a convex combination of a general poorly conditioned matrix and a compound symmetry with variance $0.5$ and correlation $0.9$, which is also poorly conditioned.
We considered the corruption level $\omega\in\{0.05,0.5,1\}$.
$\omega=1$ leads to a general non-sparse $\Sigma_*$ that is completely different than compound symmetry.
As $\omega$ decreases, the similarity extent of $\Sigma_*$ to the compound symmetry class increases.
Denoting Ber$(p,a,b)$ by the generic two point distribution that returns a numeric value $a$ (resp. $b$) with the probability $p$ (resp. $1-p$), the following is the condition numbers according to each setting:
\begin{itemize}
    \item When $(p,q)=(20,50)$, \begin{itemize}
        \item $D$ constructed upon iid Ber$(0.5,0.1,10)$ draws: 100 / 189.9075 / 412.7517 (setting A / B / C according to $w=1/0.5/0.05$)
        \item $D$ constructed upon iid Ber$(0.4,0.01,10)$ draws: 1000 / 507.617 / 453.8222 (setting D / E / F according to $w=1/0.5/0.05$)
    \end{itemize}
    \item When $(p,q)=(50,20)$, \begin{itemize}
        \item $D$ constructed upon iid Ber$(0.2,0.2,10)$ draws: 50 / 64.29671 / 153.7895 (setting G / H / I according to $w=1/0.5/0.05$)
        \item $D$ constructed upon iid Ber$(0.6,0.02,10)$ draws: 500 / 265.719 / 186.6711 (setting J / K / L according to $w=1/0.5/0.05$)
    \end{itemize}
\end{itemize}
Note that a $q\times q$ compound symmetry matrix with $q=50$ (resp. $q=20$) and $\theta=0.9$ has condition number 451 (resp. 181).
This setting for $D$ avoids a diagonally dominant $\Sigma_*$, that behaves like a diagonal matrix.
Additional experiments using diagonally dominant $\Sigma_*$ with diagonals of $D$ from a chi-square distribution) are reported in Section 3.3 of the supplementary material \citep{MRCS.supp}.
For the sparsity of $B_*$, we used $(s_1,s_2)\in\{(0.5,0.5),\,(0.5,1),\,(1,1)\}$ where the last pair accounts for dense $B_*$.
Sub-settings A-1, A-2, and A-3 correspond to these sparsity levels, with analogous notation for settings B–L.
The same competitor methods as in Section \ref{sec:dense.B} are considered.

Complete results are in Table 5--8 in the supplementary material \citep{MRCS.supp}.
In general, the oracle estimator performed best, except in cases A-3, D-3, G-3, H-3, I-3, J-3, K-3, L-3, where sparsity is violated and ridge estimators performed the best.
When $B_*$ was sparse and $\omega=1$ or $0.5$, our estimators not only performed comparably to lasso methods but also outperformed other competitors in many scenarios.
For low corruption levels ($\omega=0.05$), our equicorrelation-based estimators outperformed the others substantially.
Even when the true error covariance had varying marginal variances, MRCS and ap.MRCS performed nearly as well as MRGCS and ap.MRGCS and often outperformed the latter two.
The results from this section (and Section 3.3 of the supplementary material \citep{MRCS.supp}) indicate that our methods are robust to model misspecification except when $B_*$ is nearly dense with large signals and the true error covariance deviates substantially from an equicorrelation.

\section{Data analysis}\label{sec:data}
We examined a data set from Korea water resources management information system 
(WAMIS) on the river flow chart of the Han river and its branches. The entire raw data set is public and can be downloaded from the portal (\url{http://www.wamis.go.kr/wkw/flw_dubobsif.do}).
We used the daily stream flow measured in $m^3/s$ throughout the calendar year 2023 from February 1st to December 25th. 
The 6 selected measurement points (bridges) for the analysis are:
Haengju (HJ), Hankang (HK), Gwangjin (GJ), Paldang (PD), Yeojoo (YJ), Nahmhankang (NHK).
Within these 6 points, when considering the average flux in February to April as a reference, flows in YJ, NHK are generally on the same scale (70--100 $m^3/s$), but the other 4 points are generally on much higher scale (120--300 $m^3/s)$.
We set the predictors as the stream flux observations from 1 to 7 days before and the responses as the daily flux at each measurement point. 
Hence, there are 42 predictors and the leading intercept term, as well as 6 response variables in the model.
We used a training set that has 52 observations from February 8th to March 31th.
This indicates $(n,p,q)=(52,43,6)$.
The test set has 139 observations from August 8th to December 25th.
The competitors are MRCE, ap.MRCE, comb.lasso, sep.lasso, MRCS, MRGCS, ap.MRCS, ap.MRGCS, and the factor model [DrFARM; \citet{zhou2017sparse}, \citet{chan2025drfarm}] with the number of latent factors equal to one.
The labels are defined in Section \ref{sec:simul}.

Prediction results are summarized in Table \ref{table:realtable}--\ref{table:realtable.dat2}. 
For the overall average (Table \ref{table:realtable}), ap.MRCS performed best, followed by MRCS, with DrFARM, MRCE, ap.MRCE, and combined lasso next. ap.MRGCS and MRGCS slightly underperformed compared to combined lasso but outperformed separate lasso and OLS. By individual responses (Table \ref{table:realtable.dat2}), MRCS was best in two cases, while ap.MRCS led in three.

\begin{table}[h!]
\centering
\resizebox{\columnwidth}{!}{%
\begin{tabular}{ |c||c|c|c|c|c|c|c|c|c|c| }
 \hline
 \multicolumn{11}{|c|}{Performance comparison table} \\
 \hline
 Competitors & OLS & comb.Lasso & sep.Lasso & MRCS & MRGCS & ap.MRCS & ap.MRGCS & MRCE & ap.MRCE & DrFARM\\
 \hline
River flow & 434634.6 & 265887.3 & 283362.8 & {\bf 246039.1} & 268250.1 & {\bf 241223.4} & 268187.2 & 261599.1 & 263000.8 & 252807.3 \\
 \hline
\end{tabular}
}
\caption{The prediction performance comparison table for the combined response. We measured grand averaged squared error from 139 observations in the test set; $\|\hat Y-Y\|_F^2/139$. Boldface indicates the best model [ap.MRCS] and its canonical version [MRCS] which is the second-best.}
\label{table:realtable}
\end{table}

\begin{table}[h!]
\centering
\resizebox{\columnwidth}{!}{%
\begin{tabular}{ |c||c|c|c|c|c|c| }
 \hline
 \multicolumn{7}{|c|}{Performance comparison table by each response} \\
 \hline
 Competitors $\backslash$ Responses & HJ & HK & GJ & NHK & YJ & PD \\
 \hline
OLS & 1438608  & {\bf 179435.4}  & 217457.6  & {\bf 76625.49}  & 158738.3  & 536943.0 \\ \hline
sep.Lasso & 601845.6  & 246680.3  & {\bf 200497.8}  & 89415.67  & 147132.8  & 309751.8 \\ \hline
comb.Lasso & 601845.6  & 246680.3  & 231999.8  & 95087.78  & 159095.3  & 365468.2 \\ \hline
MRCS & 537848.0  & 239304.6  & {\bf 200497.8}  & 79507.57  & {\bf 134387.3}  & 284689.6 \\ \hline
MRGCS & 533929.2  & 229286.2  & 210172.7  & 81730.28  & 164394.0  & 389988.1  \\ \hline
ap.MRCS & {\bf 515088.8}  & 233800.8  & {\bf 200497.8}  & 79817.72  & 134401.6  & {\bf 283733.4} \\ \hline
ap.MRGCS & 537831.2  & 224128.6  & 214104.0  & 101772.8  & 164287.5  & 366998.9 \\ \hline
MRCE & 525804.0  & 227598.0  & 210888.4  & 80369.34  & 164309.4  & 360625.6 \\ \hline
ap.MRCE & 531651.2  & 223977.3  & 215168.6  & 101128.7  & 163408.5  & 342670.8 \\ \hline
DrFARM & 516587.9  & 230183.7  & 216579.5  & 100864.0  & 162076.3  & 290552.3 \\ \hline
\end{tabular}
}
 \caption{The river flow prediction performance comparison table for each response.  We measured average squared error for each response variable from 139 observations in the test set;  $(\|\hat Y^i-Y^i\|^2)/139$, for $1\le i\le 6$. Boldface indicates the best model for each response.}
\label{table:realtable.dat2}
\end{table}

\section{Conclusion}
We propose a new set of methods for multivariate linear regressions that estimate a dense error covariance matrix in high dimensional settings. As our methods leverage an equi-correlation structure, if modeling assumptions are met, then we expect to see performance gains as the number of responses increases.  This is due to the larger number of responses to estimate the common correlation parameter.

Furthermore, in general, our methods are designed to work in both high-dimensional and low-dimensional settings; we recommend using MRCS and MRGS in low-dimensional settings, and we recommend using ap.MRCS and ap.MRGS in high-dimensional settings as they reduce numerical instability, improve computational efficiency, and give accurate estimation of the regression coefficient matrix.

\section{Acknowledgment}
The authors would like to express gratitude to Professor Aaron Molstad for helpful comments on the proof details. Computational resources were provided by the WVU Research Computing Thorny Flat HPC cluster, which is funded in part by NSF OAC-1726534. Price was partially supported by the National Institute of General Medical Sciences, 5U54GM104942-04, and National Institute of Minority Health and Health Disparities, 5R21MD020187-02. The content is solely the responsibility of the authors and does not necessarily represent the official views of the NIH. 

\appendix

\section{Derivation of the optimization problems \eqref{main.objective.0}}\label{sec:appd.derivation}

We first check the derivation of the optimization problem \eqref{main.objective.0}.
Recall that
\begin{align*}
\Sigma_* = \eta^2_{*} \left\{ (1-\theta_*)I_q + \theta_* 1_q 1_{q}^T \right\}
\end{align*}
Since ${\rm det}(I+uv^T)=1+u^Tv$ for $u,\,v\in\mathbb{R}^q$ and ${\rm det}(cA)=c^q A$ for $A\in\mathbb{R}^{q\times q}$, we have
\begin{align*}
{\rm det}(\Sigma_*) 
& = \eta_*^{2q} (1-\theta_*)^q \left(1 + q \frac{\theta_*}{1-\theta_*}\right).
\end{align*}
This yields 
\begin{align*}
\log {\rm det}(\Sigma_*)  & = q\log (\eta_*^2) + (q-1) \log(1-\theta_*) + \log(1+(q-1)\theta_*).
\end{align*}
To compute $\Sigma_*^{-1}$, we use the following so-called Woodbury identity:
$$
(A + CDC')^{-1} = A^{-1} - A^{-1} C (D^{-1} + C'A^{-1}C)^{-1}C' A^{-1},
$$
with $A = (1-\theta_*) I_q$, $D = \theta_*$, and $C = 1_q$.
Then $A^{-1} = (1-\theta_*)^{-1} I_q$, $D^{-1} = \theta_*^{-1}$, and
\begin{align*}
\Sigma_*^{-1} & =   (\eta_*^2)^{-1}\left( (1-\theta_*)I_q + \theta_* 1_q 1_{q}^T \right)^{-1} \\
 & = (\eta_*^2)^{-1}  (A + CDC^T)^{-1} \\
& =(\eta_*^2)^{-1} \left[A^{-1} - A^{-1} C (D^{-1} + C'A^{-1}C)^{-1}C^T A^{-1}\right]\\
& =(\eta_*^2)^{-1}\left[(1-\theta_*)^{-1} I_q - (1-\theta_*)^{-1} I_q 1_q (\theta_*^{-1} + (1-\theta_*)^{-1}  1_q^T I_q 1_q)^{-1} 1_q^T (1-\theta_*)^{-1} I_q \right]\\
& =(\eta_*^2)^{-1}(1-\theta_*)^{-1} \left( I_q - \frac{\theta_*}{1-\theta_*+q\theta_*} 1_q 1_{q}^T\right).
\end{align*}
This identifies
\begin{equation}
\begin{split}
F_\lambda(B,\eta^2,\theta;Y,X) &= \frac{1}{\eta^2(1-\theta)}\tr\left\{n^{-1}(Y-XB)^T(Y-XB)\left[I_q-\frac{\theta}{1-\theta+q\theta}1_q1^T_q\right] \right\}\\
 &\indent+(q-1)\log(1-\theta)+\log(1+\{q-1\}\theta)+q\log(\eta^2)\\
 &\indent+\lambda \sum_{j=1}^p \sum_{k=1}^q |B_{jk}|.
\end{split}
\end{equation}
Its equivalence to the optimization problem \eqref{main.objective.0} is based on the following:
\begin{equation*}
\begin{split}
&\frac{1}{\eta^2(1-\theta)}\tr\left\{n^{-1}(Y-XB)^T(Y-XB)\left[I_q-\frac{\theta}{1-\theta+q\theta}1_q1^T_q\right] \right\}\\
&=\frac{1}{n\eta^2(1-\theta)}\|Y-XB\|_{F}^2 
 - \frac{\theta}{n\eta^2(1-\theta)(1-\theta+q\theta)} 
\tr\left\{(Y-XB)^T(Y-XB)1_q 1_{q}^T\right\}\\
&=\frac{1}{n\eta^2(1-\theta)}\|Y-XB\|_{F}^2 
 - \frac{\theta}{n\eta^2(1-\theta)(1-\theta+q\theta)} \|(Y-XB)1_q\|^2.
\end{split}
\end{equation*}

We now check the derivation of the optimization problem \eqref{main.objective.gen}.
We cyclically update $\eta_j$ while fixing $B$, $\theta$, and $\eta_{-j}$, where $\eta_{-j}=(\eta_1,\ldots,\eta_{j-1},\eta_{j+1},\ldots,\eta_q)$ for each $j\in\{1,\ldots,q\}$.
By simple algebra, one can derive that the objective function with respect to $\eta_j$ is computed as the following:
\begin{equation*}
\begin{split}
g_j(\eta_j)&=\frac{1}{n(1-\theta)}\sum_{i=1}^n e_{ij}^2\frac{1}{\eta_j^2}-\frac{\theta}{n(1-\theta)(1+(q-1)\theta)}\sum_{i=1}^n\left(\sum_{j=1}^q e_{ij}\frac{1}{\eta_j}\right)^2+2\log(\eta_j),
\end{split}
\end{equation*}
We suppress the dependence of $g_j$ on $\eta_{-j}$, $B$, and $\theta$ for notational simplicity.
From the above computation, we have
\begin{align*}
\frac{1}{2}g_j'(\eta_j)&=\frac{1}{\eta_j}+\frac{\theta}{n(1-\theta)(1+(q-1)\theta)}\sum_{i=1}^n\left(\sum_{k\neq j} e_{ij}e_{ik}\frac{1}{\eta_j^2\eta_k}\right)\\
&\indent -\frac{1+(q-2)\theta}{n(1+(q-1)\theta)(1-\theta)}\left(\sum_{i=1}^n e_{ij}^2\right)\frac{1}{\eta_j^3}.
\end{align*}
Then $g_j'=0$ yields \eqref{rhos.update}, since $\eta_j>0$.

\section{Proof of Theorem \ref{thm:classic.anal}}\label{appd:proof.classic}

We first provide two lemmas which are used for the proof of Theorem \ref{thm:classic.anal}.
Recall $S=\{i:{\rm vec}(B_*)_i\neq 0\}$, and
\begin{align*}
\Omega_*=\frac{1}{\eta^2_*(1-\theta_*)}\left[I_q-\frac{\theta_*}{1+(q-1)\theta_*}1_q1_q^T\right].
\end{align*}

\begin{lemma}\label{lem:consist.Beta}
Under the conditions \ref{assm:A1} and \ref{assm:A3}, assuming that $(\hat\eta^2,\hat\theta)$ are $\sqrt{n}$-consistent estimators of $(\eta_*^2,\theta_*)$, then, if $\lambda\sqrt{n}\rightarrow0$ and $\lambda n^{(r+1)/2}\rightarrow \infty$, the following holds.
\begin{align*}
&\lim_{n\rightarrow \infty} P(\hat B_{jk}=0)=1\qquad \qquad {\rm if~}B_{jk*}=0,\\
&\sqrt{n}({\rm vec}(\hat B)_{S}-{\rm vec}(B_*)_{S})\rightarrow_d N(0,D^{-1}),   
\end{align*}
where $\hat B$ is a solution to the optimization problem \eqref{main.objective.weight} for fixed $(\hat\eta^2,\hat\theta)$, and $D=(\Omega_* \otimes Z)_{S}$.
\end{lemma}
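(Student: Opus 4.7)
Since $(\hat\eta^2,\hat\theta)$ are treated as fixed $\sqrt{n}$-consistent statistics in this lemma, set $\hat\Omega=\Omega(\hat\eta^2,\hat\theta)$; by the continuous mapping theorem $\hat\Omega\to_p\Omega_*$. The $B$-subproblem is then strictly convex (a quadratic in $\beta:=\VEC(B)$ plus a weighted $L_1$ penalty):
\begin{equation*}
G_n(\beta) \;=\; n^{-1}\tr\bigl[(Y-XB)^T(Y-XB)\hat\Omega\bigr] + \lambda\sum_{j,k} w_{jk}|B_{jk}|.
\end{equation*}
Mirroring the adaptive-lasso argument of \citet{zou2006adaptive} and \citet{lee2012simultaneous}, I introduce the local parameter $u=\sqrt n(\beta-\beta_*)$ and the scaled excess objective $V_n(u) = n\bigl[G_n(\beta_*+u/\sqrt n)-G_n(\beta_*)\bigr]$; then $\hat u=\sqrt n(\hat\beta-\beta_*)$ is the unique minimizer of $V_n$.

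For the smooth part $G_n^{sm}$ (quadratic in $\beta$, so Taylor is exact), the score at $\beta_*$ equals $-(2/n)\VEC(X^TE\hat\Omega)=-(2/n)(\hat\Omega\otimes X^T)\VEC(E)$, whence
\begin{equation*}
\sqrt n\,\nabla G_n^{sm}(\beta_*) \;=\; -2(\hat\Omega\otimes I_p)\,\VEC\bigl(n^{-1/2}X^TE\bigr).
\end{equation*}
The summands $\epsilon_i\otimes x_i$ are independent, mean zero, with covariance $\Sigma_*\otimes x_ix_i^T$; under \ref{assm:A1} together with the finite fourth-moment hypothesis \ref{assm:A4} (Gaussianity suffices), Lindeberg's CLT yields $\VEC(n^{-1/2}X^TE)\to_d N(0,\Sigma_*\otimes Z)$. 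Slutsky combined with $\hat\Omega\to_p\Omega_*$ then gives $\sqrt n\nabla G_n^{sm}(\beta_*)\to_d -2W'$ with $W'\sim N(0,\Omega_*\Sigma_*\Omega_*\otimes Z)=N(0,\Omega_*\otimes Z)$. The Hessian $2(\hat\Omega\otimes n^{-1}X^TX)$ converges in probability to $2(\Omega_*\otimes Z)$, so the smooth part of $V_n(u)$ converges pointwise in $u$ to $V_{sm}(u) = -2(W')^Tu + u^T(\Omega_*\otimes Z)u$.

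For the penalty contribution $n\lambda\sum w_{jk}\bigl[|B_{jk*}+u_{jk}/\sqrt n|-|B_{jk*}|\bigr]$, I analyze it index by index. If $(j,k)\in S$, then $w_{jk}\to_p|B_{jk*}|^{-r}$ and the increment linearizes to $\sqrt n\lambda w_{jk}\sign(B_{jk*})u_{jk}+o(1)$, which vanishes because $\sqrt n\lambda\to 0$. If $(j,k)\notin S$, then $B_{jk}^{({\rm ols})}=O_p(n^{-1/2})$ under \ref{assm:A1}, so $w_{jk}=O_p(n^{r/2})$, and the contribution at $u_{jk}\neq 0$ is $\sqrt n\lambda w_{jk}|u_{jk}|=O_p(\lambda n^{(r+1)/2})|u_{jk}|\to\infty$. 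Thus $V_n(u)\to_d V(u)$ where $V(u)=V_{sm}(u)$ on $\{u:u_{S^c}=0\}$ and $V(u)=+\infty$ otherwise. Because each $V_n$ is convex in $u$, Geyer's (1994) epi-convergence argument (as used by Knight and Fu, 2000, and by \citet{zou2006adaptive}) transfers pointwise distributional convergence of the objective to weak convergence of the argmin: $\hat u_S\to_d D^{-1}W'_S\sim N(0,D^{-1})$ and $\hat u_{S^c}\to_p 0$. Finally, to upgrade $\hat u_{S^c}\to_p 0$ into the exact sparsity statement $P(\hat B_{jk}=0)\to 1$ for $(j,k)\notin S$, I apply the standard KKT contradiction: $\hat B_{jk}\neq 0$ would force $|[\nabla G_n^{sm}(\hat\beta)]_{jk}|=\lambda w_{jk}$, but the left side is $O_p(n^{-1/2})$ by consistency of $\hat\beta$, while $\sqrt n\lambda w_{jk}\asymp\lambda n^{(r+1)/2}\to\infty$, which is impossible.

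\textbf{Main obstacle.} The principal technical step is rigorously passing from pointwise-in-distribution convergence of the convex random function $V_n$ to weak convergence of its argmin when the limit $V$ is extended-real-valued. This is handled by the epi-convergence lemma for convex random functions, whose hypotheses reduce to existence of a unique minimizer of $V$; this in turn follows from positive definiteness of $D=(\Omega_*\otimes Z)_S$ under \ref{assm:A1}. A secondary subtlety is handling $\hat\Omega$ inside the score: Slutsky applies cleanly only because $\VEC(n^{-1/2}X^TE)$ does not depend on $\hat\Omega$ and $\hat\Omega$ is asymptotically deterministic at rate $\sqrt n$.
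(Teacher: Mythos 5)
Your proposal is correct and follows essentially the same route as the paper: the paper's proof verifies that $\sqrt{n}$-consistency of $(\hat\eta^2,\hat\theta)$ transfers to the entries of $\hat\Omega$ and then defers to the adaptive-lasso limiting argument of \citet{lee2012simultaneous} and \citet{zou2006adaptive}, which is precisely the local-parameter expansion, epi-convergence of the convex objective, and KKT sparsity upgrade that you write out in detail. The only cosmetic difference is that you supply explicitly the steps the paper omits by citation (and your remark that Slutsky needs independence of $\hat\Omega$ from $\VEC(n^{-1/2}X^TE)$ is unnecessary, since convergence of $\hat\Omega$ to a constant suffices regardless of dependence).
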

\begin{proof}
The diagonals of $\Omega_*$ are $\frac{1+(q-2)\theta_*}{\eta^2_*(1-\theta_*)(1+(q-1)\theta_*)}=\frac{1}{\eta^2_*(1-\theta_*)}(1-\frac{\theta_*}{1+(q-1)\theta_*})$ and the off-diagonals are $-\frac{\theta_*}{\eta^2_*(1-\theta_*)(1+(q-1)\theta_*)}$.
Let $v_*=\frac{1}{\eta^2_*(1-\theta_*)}$ and $w_*=\frac{\theta_*}{\eta^2_*(1-\theta_*)(1+(q-1)\theta_*)}$.
We further denote $\hat v=\frac{1}{\hat\eta^2(1-\hat\theta)}$ and $\hat w=\frac{\hat\theta}{\hat\eta^2(1-\hat\theta)(1+(q-1)\hat\theta)}$.
Then,
\begin{align*}
|\hat v- v_*|&=\left|\frac{\hat\eta^2(1-\hat\theta)-\eta^2_*(1-\theta_*)}{\eta^2_*(1-\theta_*)\hat\eta^2(1-\hat\theta)}\right|\\
&\le \frac{|\hat\eta^2-\eta^2_*|+\hat\eta^2|\hat\theta-\theta_*|+\theta_*|\hat\eta^2-\eta^2_*|}{\eta^2_*(1-\theta_*)\hat\eta^2(1-\hat\theta)}=O_p(1/\sqrt{n}),
\end{align*}
since $\hat\eta^2,1/\hat\eta^2,1/(1-\hat\theta)=O_p(1)$ and $|\hat\eta^2-\eta^2_*|,|\hat\theta-\theta_*|=O_p(1/\sqrt{n})$.
Similarly, one can check that $|\hat w-w_*|=O_p(1/\sqrt{n})$.
This implies that the $\sqrt{n}$-consistent estimator of $(\eta^2_*,\theta_*)$ guarantees the existence of $\sqrt{n}$-consistent estimator of $\Omega$. The rest of the proof follows the same derivation steps provided in the proofs of Lemma 1 and Theorem 1 of \citet{lee2012simultaneous}. Hence, we omit the proof.
\end{proof}

We define a matrix, $V\in\mathbb{R}^{2\times 2}$, which characterizes the limiting covariance of $(\hat\eta,\hat\theta)$.
This symmetric matrix has the following four elements:
\begin{equation}\label{eq:V.elements}
\begin{split}
&V_{11}=\frac{1}{q^2}\left(\sum_{j=1}^q {\rm var}(E_{ij}^2)+2\sum_{j<k} {\rm cov}(E_{ij}^2,E_{ik}^2)\right),\\  
&V_{22}=\frac{1}{q^2(q-1)^2}\sum_{j,k,l,m}Q_{jk}Q_{lm}{\rm cov}(E_{ij}E_{ik},E_{il}E_{im}),\\
&V_{12}=V_{21}=\frac{1}{q^2}{\rm var}\left(\sum_{j=1}^q E_{ij}^2\right)-\frac{2}{q^2(q-1)}{\rm cov}\left(\sum_{j<k}E_{ij}E_{ik},\sum_{j=1}^q E_{ij}^2\right),   
\end{split}   
\end{equation}
where $Q=(qI_q-1_q1_q^T)$.

\begin{lemma}\label{lem:consist.Omega}
Under the conditions \ref{assm:A1}, \ref{assm:A2} and \ref{assm:A4}, assuming that $\hat B$ is $\sqrt{n}$-consistent to $B_*$, then, for $W=(1,-1/\eta_*^2)^T$, the following holds.
\begin{align*}
\left\|(\hat\eta^2,\hat\theta)-(\eta_*^2,\theta_*)\right\|=O_p(1/\sqrt{n}),
\end{align*}
where $(\hat\eta^2,\hat\theta)$ is a solution to the optimization problem \eqref{main.objective.weight} for fixed $\hat B$.
Furthermore, if $\theta_*\in (0,1)$, then
\begin{align*}
\sqrt{n}(\hat\eta^2,\hat\theta)-(\eta_*^2,\theta_*))\rightarrow_d W^TN_2(0,V),
\end{align*}
where $W=(1,-1/\eta_*^2)^T$, and $V$ is defined in \ref{eq:V.elements}.
\end{lemma}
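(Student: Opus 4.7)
The plan is to exploit the reparametrization $(\alpha,\gamma) = (\eta^2(1-\theta),\,\eta^2(1+(q-1)\theta))$ already used to derive the closed-form updates \eqref{eq:alpha.form}--\eqref{eq:gamma.form}. For fixed $\hat B$ the $(\alpha,\gamma)$-subproblem is separable and strictly convex on its constraint region, and its minimizers are $\hat\alpha = (q\tilde M_1 - \tilde M_2)/(q(q-1))$ and $\hat\gamma = \max\{\hat\alpha,\,\tilde M_2/q\}$ with $\tilde M_1 = n^{-1}\|Y-X\hat B\|_F^2$ and $\tilde M_2 = n^{-1}\|(Y-X\hat B)1_q\|^2$. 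Both conclusions will then follow by transporting a CLT for $(\tilde M_1, \tilde M_2)$ through the affine map defining $(\hat\alpha,\hat\gamma)$ and the smooth back-map $(\alpha,\gamma) \mapsto (\eta^2, \theta) = \bigl(((q-1)\alpha+\gamma)/q,\, (\gamma-\alpha)/(\gamma+(q-1)\alpha)\bigr)$.

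First I would linearize $\tilde M_j$ around the oracle averages $\bar M_j$ obtained by substituting $B_*$ for $\hat B$. Writing $Y - X\hat B = E + X(B_* - \hat B)$ yields
\[
\tilde M_1 - \bar M_1 = \frac{2}{n}\tr\bigl((B_*-\hat B)^T X^T E\bigr) + \frac{1}{n}\|X(\hat B - B_*)\|_F^2,
\]
with an analogous identity for $\tilde M_2$ after replacing $E$ by $E 1_q$. Assumption \ref{assm:A1} yields $X^T X/n = O(1)$ and $\|X^T E/n\| = O_p(n^{-1/2})$, while $\sqrt{n}$-consistency of $\hat B$ from \ref{assm:A2} gives $\|X(\hat B-B_*)\|_F^2/n = O_p(1/n)$; together these produce $\sqrt{n}(\tilde M_j - \bar M_j) = o_p(1)$. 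With $T_{1i} = \sum_j E_{ij}^2$ and $T_{2i} = (\sum_j E_{ij})^2$, the multivariate CLT (permitted by the fourth-moment condition \ref{assm:A4}) delivers $\sqrt{n}\bigl((\bar M_1,\bar M_2) - (q\eta_*^2,\,q\gamma_*)\bigr) \to_d N_2(0,\Sigma_M)$, where $\gamma_* = \eta_*^2(1+(q-1)\theta_*)$. Pushing forward by the affine map $(M_1,M_2)\mapsto \bigl((qM_1 - M_2)/(q(q-1)),\,M_2/q\bigr)$ then gives $\sqrt{n}\bigl((\hat\alpha,\hat\gamma)-(\alpha_*,\gamma_*)\bigr) \to_d N_2(0,V)$, and using the identity $qT_{1i} - T_{2i} = E_i^T Q E_i$ with $Q = qI_q - 1_q 1_q^T$ the entries of the pushed-forward covariance collapse, term by term, to those listed in \eqref{eq:V.elements}.

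When $\theta_* \in (0,1)$ the gap $\gamma_* - \alpha_* = q\eta_*^2\theta_*$ is strictly positive, so with probability tending to one the order constraint is slack and $\hat\gamma = \tilde M_2/q$ coincides with the unconstrained minimizer. On this event $(\hat\alpha,\hat\gamma)$ is affine in $(\tilde M_1,\tilde M_2)$, and composing with the back-map to $(\eta^2,\theta)$ and invoking the delta method (with the Jacobian of the back-map at the truth playing the role of $W$) supplies the joint limit law in the statement. The $O_p(n^{-1/2})$ consistency follows immediately, and extends to the boundary case $\theta_* = 0$ by the Lipschitz continuity of $\max$ in the definition of $\hat\gamma$ together with continuity of the back-map there. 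I expect the main obstacle to be the plug-in step: one must verify that the residual contamination from using $\hat B$ in place of $B_*$ vanishes at the $\sqrt{n}$-scale, so that $\sqrt{n}(\tilde M_j - E[\bar M_j])$ inherits the Gaussian limit of $\sqrt{n}(\bar M_j - E[\bar M_j])$. Once that is in hand the remaining work is routine delta-method algebra, though matching the transformed covariance to the explicit $V$ in \eqref{eq:V.elements} requires a careful fourth-moment expansion of $\operatorname{Var}(E_i^T Q E_i)$ and of its cross-covariance with $\sum_j E_{ij}^2$.
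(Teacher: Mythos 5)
Your strategy is essentially the paper's: reparametrize to $(\alpha,\gamma)$, exploit the closed forms \eqref{eq:alpha.form}--\eqref{eq:gamma.form}, apply a CLT to the quadratic forms $M_1,M_2$ (justified by the fourth-moment condition), observe that the $\max$ constraint is slack with probability tending to one when $\theta_*\in(0,1)$, and finish with the delta method, treating the $\hat B$-versus-$B_*$ substitution as a separate $o_p(n^{-1/2})$ perturbation (which you work out explicitly, where the paper simply invokes the corresponding step of Lee and Liu's Theorem~2 — your version is the more self-contained of the two).

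One concrete slip to fix: you assert that pushing $(\bar M_1,\bar M_2)$ through the map $(M_1,M_2)\mapsto\bigl((qM_1-M_2)/(q(q-1)),\,M_2/q\bigr)$ yields $\sqrt{n}\bigl((\hat\alpha,\hat\gamma)-(\alpha_*,\gamma_*)\bigr)\to_d N_2(0,V)$ with the $V$ of \eqref{eq:V.elements}. That $V$ is not the asymptotic covariance of $(\hat\alpha,\hat\gamma)$: its $(1,1)$ entry is $\operatorname{var}\bigl(q^{-1}\sum_j E_{ij}^2\bigr)$, i.e. the variance of the summand of $\hat\delta:=M_1^*/q$, and its $(2,2)$ entry is $\operatorname{var}\bigl(E_i^TQE_i/(q(q-1))\bigr)$, the variance of the summand of $\hat\alpha$. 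The paper therefore takes the pre-delta-method vector to be $(\hat\delta,\hat\alpha)$ — noting that $\hat\eta^2=\hat\delta$ with probability tending to one when $\theta_*>0$ — and applies $g(x,y)=(x,(x-y)/x)$, which is exactly what makes $V$ appear without further transformation. With your choice of intermediate vector the limit covariance is a different matrix and the "term by term collapse" to \eqref{eq:V.elements} will not go through; you would need the Jacobian of your back-map to absorb the discrepancy, which amounts to redoing the paper's bookkeeping in a less convenient coordinate system. Your Lipschitz argument for the $O_p(n^{-1/2})$ rate at the boundary $\theta_*=0$ is fine and matches the paper's conclusion there (the limit law is claimed only for $\theta_*\in(0,1)$).
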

\begin{proof}
It suffices to assume that true $B_*$ is known, since the natural following step which is the replacement of $B_*$ with $\hat B$ can be directly obtained by application of the derivation used in the proof of Theorem 2 in \citet{lee2012simultaneous}.
Recall the reparametrization of $(\eta^2,\theta)$ to $(\alpha,\gamma)$ via $\alpha=\eta^2(1-\theta)$, $\gamma=\eta^2(1+(q-1)\theta)$ by which we derived the closed form solutions to $\hat \alpha$ and $\hat \gamma$.
The solutions were $\hat\alpha=\frac{qM_1^*-M_2^*}{q(q-1)}$, $\hat\gamma=\max\{\hat\alpha,M_2^*/q\}$, where $M_1^*=\frac{1}{n}\|E\|_F^2$ and $M_2^*=\frac{1}{n}\|E1_q\|^2$ (see \eqref{eq:alpha.form}, \eqref{eq:gamma.form}). 
We define $\hat\delta=\frac{M_1^*}{q}$.
We first find the joint limiting distribution of $(\hat\delta,\hat\alpha)$.
We have
\begin{align*}
\frac{qM_1^*-M_2^*}{q(q-1)}&=\frac{1}{nq(q-1)}\left\{\sum_{i=1}^n\left[\sum_{j=1}^q (q-1)E_{ij}^2-2\sum_{1\le j<k\le q}E_{ij}E_{ik}\right]\right\}\\
&=\frac{1}{nq(q-1)}\sum_{i=1}^n E_i^TQE_i,   
\end{align*}
where $E_i=(E_{i1},\ldots,E_{iq})^T\in\mathbb{R}^q$ and $Q=(qI_q-1_q1_q^T)$. We further have
\begin{align*}
&\mathbb{E}(E_i^TQE_i/(q(q-1)))=\eta_*^2(1-\theta_*),\\
&{\rm var}\left(E_i^TQE_i/(q(q-1))\right)=\frac{1}{q^2(q-1)^2}\sum_{j,k,l,m}Q_{jk}Q_{lm}{\rm cov}(E_{ij}E_{ik},E_{il}E_{im}).
\end{align*}
In addition, $\hat\delta$ can be expressed as $\frac{1}{nq}\sum_{i=1}^n\sum_{j=1}^q E_{ij}^2=\frac{1}{nq} \sum_{i=1}^n E_i^T I_q E_i$, which satisfies
\begin{align*}
&\mathbb{E}(E_i^T I_q E_i/q)=\eta_*^2,\\
&{\rm var}\left(E_i^T I_q E_i/q\right)=\frac{1}{q^2}\left(\sum_{j=1}^q {\rm var}(E_{ij}^2)+2\sum_{j<k} {\rm cov}(E_{ij}^2,E_{ik}^2)\right).
\end{align*}
Likewise, we obtain
\begin{align*}
{\rm cov}\left(E_i^T I_q E_i/q,E_i^TQE_i/(q(q-1))\right)=\frac{1}{q^2}{\rm var}\left(\sum_{j=1}^q E_{ij}^2\right)-\frac{2}{q^2(q-1)^2}{\rm cov}\left(\sum_{j<k}E_{ij}E_{ik},\sum_{j=1}^q E_{ij}^2\right).    
\end{align*}
Thus, we have
\begin{align}\label{delta.alpha.limit}
\sqrt{n}((\hat\delta,\hat\alpha)-(\eta_*^2,\eta_*^2(1-\theta_*)))\rightarrow_d N_2(0,V).
\end{align}
And, since $\hat\eta^2=(1-1/q)\hat\alpha+\hat\gamma/q$, we have
\begin{align*}
\hat\eta^2-\hat\delta=\frac{\hat\gamma}{q}-\frac{M_2^*}{q^2}=\max\left\{\frac{\hat\alpha}{q},\frac{M_2^*}{q^2}\right\}-\frac{M_2^*}{q^2}.
\end{align*}
When $\theta_*\in(0,1)$, $\hat\alpha\rightarrow_p \eta^2(1-\theta)$, and $\max\{\hat\alpha,M_2^*/q\}=M_2^*/q$ with probability tending to $1$, since $M_2^*/q\rightarrow_p\eta_*^2$.
This yields, $\hat\eta^2=\hat\delta$ with probability tending to $1$ and it further implies
\begin{align*}
\sqrt{n}((\hat\eta^2,\hat\alpha)-(\eta^2_*,\eta^2_*(1-\theta_*)))\rightarrow_d N_2(0,V).
\end{align*}
Then, by the same Delta method via $g(x,y)=(x,(x-y)/x)$, we have
\begin{align*}
\sqrt{n}\left(\left(\hat\eta^2,\frac{\hat\eta^2-\hat\alpha}{\hat\eta^2}\right)-(\eta_*^2,\theta_*)\right)\rightarrow_d W^TN_2(0,V),
\end{align*}
where $W=(1,-1/\eta_*^2)^T$, since $\hat\delta>0$ almost surely. 
It is obvious that $\frac{\hat\eta^2-\hat\alpha}{\hat\eta^2}=\frac{\hat\gamma-\hat\alpha}{\hat\gamma+(q-1)\hat\alpha}=\hat\theta$.
On the other hand, when $\theta_*=0$, 
\eqref{delta.alpha.limit} with the Delta method yields,
\begin{align*}
\sqrt{n}\left(\left(\hat\delta,\frac{\hat\delta-\hat\alpha}{\hat\delta}\right)-(\eta_*^2,\theta_*)\right)\rightarrow_d W^TN_2(0,V).
\end{align*}
Then it suffices to show $|\hat\delta-\hat\eta^2|=O_p(1/\sqrt{n})$ and $|\hat\alpha(1/\hat\eta^2-1/\hat\delta)|=O_p(1/\sqrt{n})$.
Indeed, we have
\begin{align*}
q|\hat\delta-\hat\eta^2|\leq \left|\frac{M_2^*}{q}-\hat\alpha\right|=\left|\frac{M_2^*}{q}-\frac{qM_1^*-M_2^*}{q(q-1)}\right|=\left|\frac{M_2^*-M_1^*}{q-1}\right|,
\end{align*}
and $M_2^*-M_1^*=\frac{1}{n}\sum_{i=1}^n E_i^T Q_0 E_i$, where $Q_0$ has 0 for all diagonal elements and $1$ for all off-diagonal elements. 
Thus, by similar asymptotic derivation steps above, one can easily check $|\hat\delta-\hat\eta^2|=O_p(1/\sqrt{n})$.
Since $\hat\eta^2,\hat\delta,\hat\alpha\rightarrow_p \eta_*^2>0$, $|1/\hat\eta^2|,|1/\hat\delta|,|\hat\alpha|=O_p(1)$.
This implies
\begin{align*}
|\hat\alpha(1/\hat\eta^2-1/\hat\delta)|=|\hat\alpha||1/\hat\delta||1/\hat\eta^2||\hat\eta^2-\hat\delta|=O_p(1/\sqrt{n}).  
\end{align*}
Finally, we have
\begin{align*}
\left\|\left(\hat\delta,\frac{\hat\delta-\hat\alpha}{\hat\delta}\right)-\left(\hat\eta^2,\frac{\hat\eta^2-\hat\alpha}{\hat\eta^2}\right)\right\|\le |\hat\eta^2-\hat\delta|+|\hat\alpha(1/\hat\eta^2-1/\hat\delta)|=O_p(1/\sqrt{n}), 
\end{align*}
and it completes the proof.
\end{proof}

Now we give the proof of Theorem \ref{thm:classic.anal}.
\begin{proof}
We already verified that a $\sqrt{n}$-consistent estimator of $(\eta_*^2,\theta_*)$ guarantees the existence of the $\sqrt{n}$-consistent estimator of $\Omega_*$ by Lemma \ref{lem:consist.Beta}.
Combining the result in Lemma \ref{lem:consist.Omega}, the rest of the proof follows the same derivation used in the proofs of Lemma 3 and Theorem 3 of \citet{lee2012simultaneous}.
Hence, we omit the proof.
\end{proof}

\begin{figure}
     \centering
     \begin{subfigure}[b]{0.45\textwidth}
         \centering
         \includegraphics[width=\textwidth]{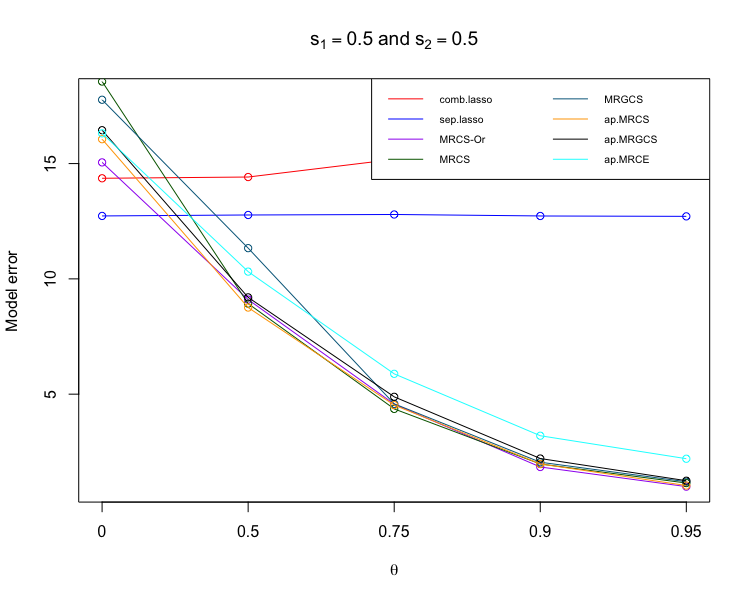}
         \caption{Compound symmetry; $(p,q)=(20,50)$}
         \label{fig:section7-fig1}
     \end{subfigure}
     \hspace{0.2cm}
     \begin{subfigure}[b]{0.45\textwidth}
         \centering
         \includegraphics[width=\textwidth]{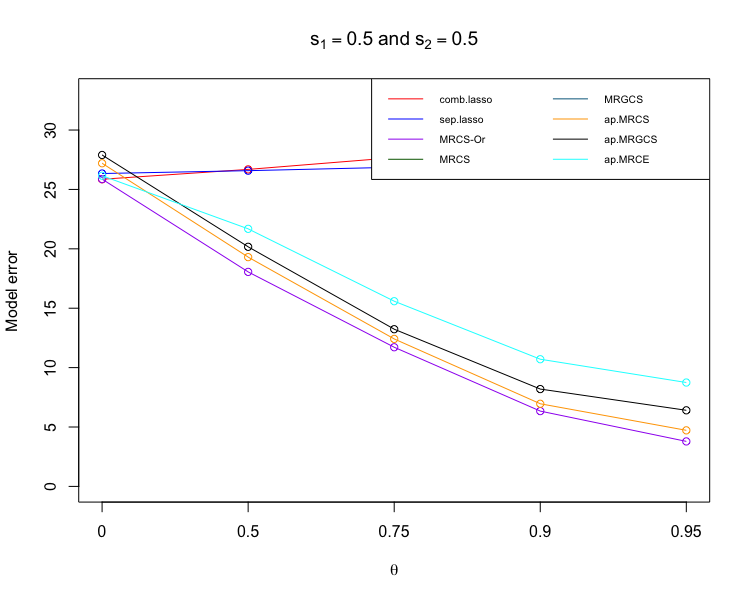}
         \caption{Compound symmetry; $(p,q)=(50,20)$}
         \label{fig:section7-fig2}
     \end{subfigure}\\
     \begin{subfigure}[b]{0.45\textwidth}
         \centering
         \includegraphics[width=\textwidth]{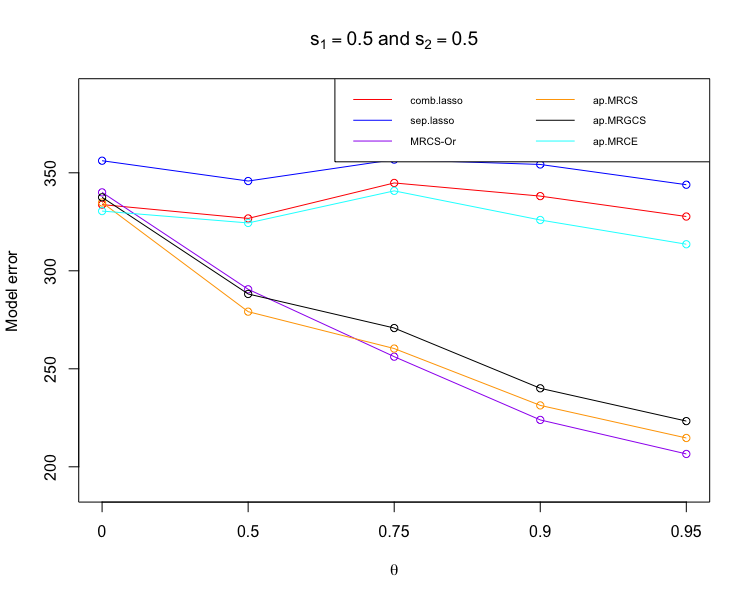}
         \caption{Compound symmetry; $(p,q)=(80,80)$}
         \label{ffig:section7-fig3}
     \end{subfigure}
         \hspace{0.2cm}
    \begin{subfigure}[b]{0.45\textwidth}
         \centering
         \includegraphics[width=\textwidth]{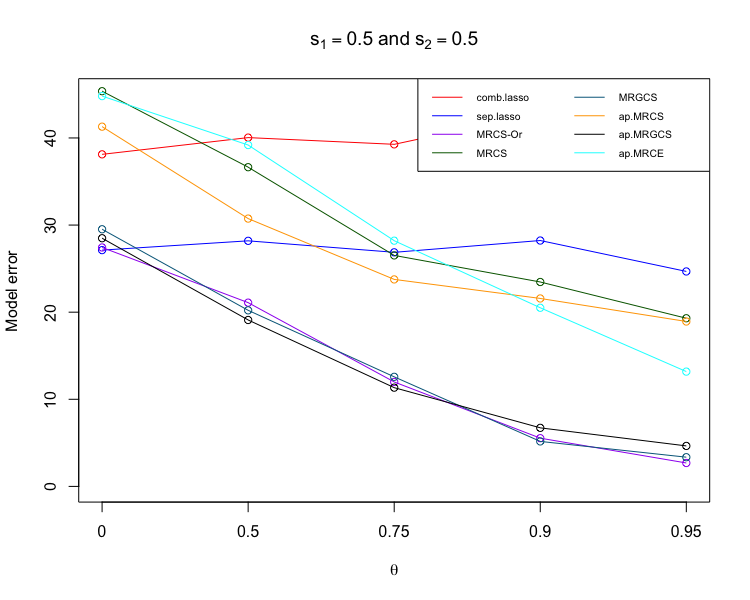}
         \caption{Equicorrelation; $(p,q)=(20,50)$}
         \label{fig:section7-fig4}
     \end{subfigure}\\
    \begin{subfigure}[b]{0.45\textwidth}
         \centering
         \includegraphics[width=\textwidth]{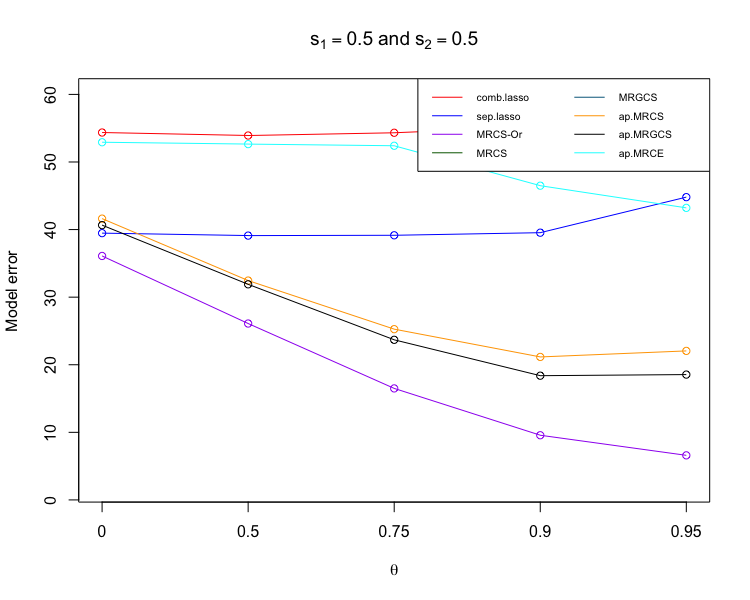}
         \caption{Equicorrelation; $(p,q)=(50,20)$}
         \label{fig:section7-fig5}
     \end{subfigure}
         \hspace{0.2cm}
    \begin{subfigure}[b]{0.45\textwidth}
         \centering
         \includegraphics[width=\textwidth]{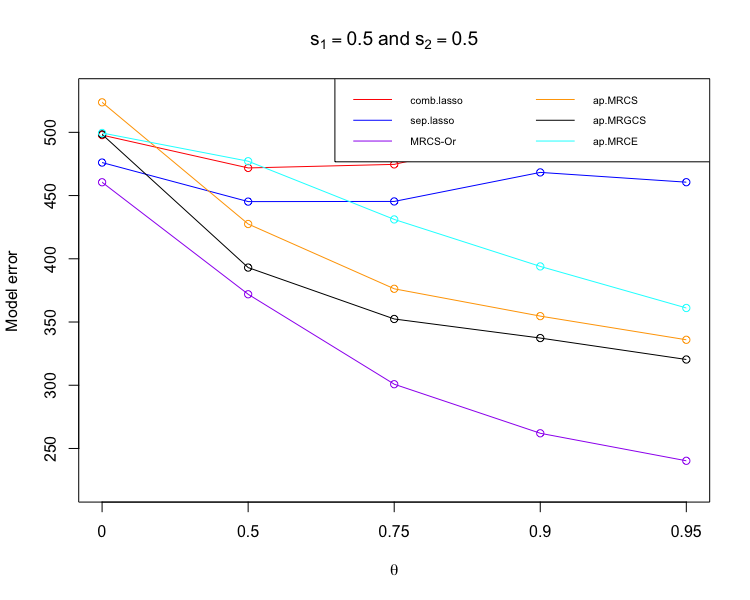}
         \caption{Equicorrelation; $(p,q)=(80,80)$}
         \label{fig:section7-fig6}
     \end{subfigure}\\
     \vspace{-0.2cm}
\caption{Model error comparison plots for the setting $(s_1,s_2) = (0.5,0.5)$ under compound symmetry error covariance and equicorrelation.  The data-generating processes are described in Sections \ref{sec:fixed.eta.simul} and \ref{sec:varied.eta.asym}, respectively.}
\label{fig:section7-6figs}
\end{figure}

\bibliography{allref}

\end{document}